\newtheorem{proposition}{Proposition}[section]
\newtheorem{corollary}{Corollary}[proposition]
\newtheorem{lemma}[proposition]{Lemma}
\newtheorem{theorem}[proposition]{Theorem}
\newtheorem{remark}[proposition]{Remark}
\newcommand{\indep}{\perp\!\!\!\!\!\!\perp} 
\newcommand{\1}{\mathds{1}}
\newcommand{\model}{\mathfrak{M}}
\title{Evidence estimation in finite and infinite mixture models and applications}
\author{Adrien Hairault${}^1$, Christian P.~Robert${}^{1,2}$ and Judith Rousseau${}^3$\\${}^1$Université Paris Dauphine, ${}^2$University of Warwick, and ${}^3$University of Oxford }
\date{\today}
\begin{document}

\maketitle
\begin{abstract}
    Estimating the model evidence - or mariginal likelihood of the data - is a notoriously difficult task for finite and infinite mixture models and we reexamine here different Monte Carlo techniques advocated in the recent literature, as well as novel approaches based on \cite{geyer1994estimating} reverse logistic regression technique, \cite{chib1995marginal} algorithm, and Sequential Monte Carlo (SMC). Applications are numerous. In particular,
    testing for the number of components in a finite mixture model or against the fit of a finite mixture model for a given dataset has long been and still is an issue of much interest, albeit yet missing a fully satisfactory resolution. Using a Bayes factor to find the right number of components $K$ in a finite mixture model is known to provide a consistent procedure. We furthermore establish the consistence of the Bayes factor when comparing a parametric family of finite mixtures against the nonparametric ‘strongly identifiable' Dirichlet Process Mixture (DPM) model. 
\end{abstract}
\section{Introduction}

Mixture models, defined as the convex combination of probability distributions, are of significant interest due to their convenient way of modelling heterogeneity in a given population. When some group structure is present in the data, Finite Mixture models (FM) arise as a natural tool to conduct a Bayesian clustering analysis. They have been successfully applied to computer vision (\cite{stauffer1999adaptive}), document classification (\cite{blei2003latent}) but also more generally to genetics, physics, or economics (see \cite{mclachlan2019finite} for a comprehensive review of applications). Alternatively, the Dirichlet Process Mixture model (DPM), or ‘infinite' mixture model, first introduced by \cite{ferguson1983bayesian}, has become one of the main tools in the field of Bayesian non-parametrics. Its range of applications is broad, from multivariate clustering (see, e.g., \cite{crepet2011bayesian}, \cite{quintana2003bayesian}) to Bayesian density estimation (\cite{escobar1995bayesian}, \cite{neal1992bayesian}, \cite{rasmussen1999infinite}, \cite{lo1984class}).

Bayesian model selection is primarily done by comparing the \textit{marginal likelihoods} of the data (a.k.a model evidence) for competing models, defined as
\begin{equation}
    \label{evidence}
m(y)=\int_{\Theta}f(x|\theta)\pi(d\theta)
\end{equation} for some data $y$, density function $f$, prior $\pi$, and parameter space $\Theta$.

In the context of mixture modelling, being able to compute the evidence of a model is of great importance and has several applications of interest. We here present two of them.

First, the usual problem of statistical modelling of determining whether an $n-$sample $( y_1,\dots, y_n)$ arises from a particular parametric family of distributions can be formalised as a goodness of fit test against a nonparametric alternative, like the DPM (\cite{kamary2014testing}). \cite{tokdar2021bayesian} designs a particular DPM alternative for testing normality. The Bayes Factor was proven to be consistent for a point null hypothesis against a large class of nonparametric alternatives by \cite{dass2004note} or \cite{verdinelli1998bayesian}. \cite{mcvinish2009bayesian} derive sufficient conditions on nonparametric distributions for which the consistence of the Bayes factor for testing a parametric family of distributions holds. We establish here (Theorem \ref{UBMDP1}) the consistence of the Bayes factor comparing the parametric family of finite mixtures against the nonparametric location Dirichlet Process Mixture (DPM) model. Obviously, these results can only be of use in practice when one is able to produce a numerical approximation to marginal likelihoods and, subsequently, to the Bayes Factor.

Second, marginal likelihood can be used to find the right number of components $K$ in a finite mixture model. One way of addressing this challenge is indeed to perform model selection using the Bayes Factor by computing the marginal likelihood of the data for various values of $K$. From a clustering perspective, it can be regarded as finding the optimal number of groups $K$ in a population.


Evaluating the model evidence (\ref{evidence}) is unfortunately a notoriously difficult task for mixture models.
For finite mixtures, several Monte Carlo methods have been developed in order to estimate (\ref{evidence}), among which Importance Sampling (IS) methods such as Bridge Sampling are popular \citep{fruhwirth2004estimating}. \cite{chib1995marginal} proposed another approach based on the Bayes identity which initial bias, as pointed out by \cite{neal1999erroneous}, can be corrected \citep{lee2016importance}. However, if one wishes to retrieve an estimator with a reasonable variance, those methods get very computationally expensive as $K$, the number of mixture components, increases (see \cite{lee2016importance}). Values of $K$ as small as $6$ already represent a significant computational burden, stressing the need for new estimators more suited to real-life applications where $K$ need not be small.

While there exists an abundant literature and well-established methods for computing the evidence of parametric models, equivalent methods for non-parametric models are far from numerous. In fact, to our knowledge, only \cite{basu2003marginal} directly address the issue of evidence estimation for the DPM by adapting the method of \cite{chib1995marginal}. SMC tools have been proposed by \cite{maceachern1999sequential} for beta-binomial Dirichlet Process mixtures. This idea was later generalized by \cite{griffin2011sequential} and applied to a very particular kind of DPM by \cite{tokdar2021bayesian}. However, their proposed SMC framework relies on the strong assumption that the concentration parameter $M$ of the DP is known and fixed. \cite{quintana2000computational} give an \textit{ad hoc} procedure in order to find the maximum likelihood estimator of $M$. 
Neither Chib's algorithm nor SMC appear to be widely used by practitioners.
Hence, it is a common practice to use the DPM without considering alternative parametric or non-parametric models, although it may not always be appropriate.
For instance, \cite{miller2014inconsistency} show that for data assumed to come from a finite mixture with an unknown number of components, the DPM posterior on the number of clusters in the observed data does not converge to the true number of components.

Using the partitioning structure induced by finite mixture models, we derive in Section \ref{sec_1} an unbiased estimator of the marginal likelihood of a conjugate finite mixture model which computational time does not grow exponentially with $K$, as is the case with the popular \cite{chib1995marginal} method, for instance. Based on \cite{kong1994sequential} sequential imputation algorithm, we also propose an effective Sequential Importance Sampling (SIS) strategy for estimating the evidence of conjugate finite mixtures. Given their simplicity and effectiveness, we believe these marginal likelihood estimators should greatly ease tasks like the selection of the number of components for finite mixture models, for instance. We also provide a comparative assessment of many different marginal likelihood estimators, and identify those that scale well in difficult scenarios where both the size of the data $n$ and the number of components $K$ are large, which to our knowledge has not been done before.

As for the DPM, we propose in Section \ref{sec:DPM} an alternative algorithm based on Reverse Logistic Regression (\cite{geyer1994estimating}) which does not require the concentration parameter $M$ to be fixed but rather to follow a Gamma prior distribution. We show empirically that this method scales better with the amount of data than \cite{basu2003marginal} does. We also provide a review and assessment of the ways to estimate the marginal likelihood of a non-parametric DPM model, which, to our knowledge, has not been done before. In a similar way to Section \ref{sec_1}, we provide an empirical study of marginal likelihood estimators, including scenarios where $n$ is large. We also empirically assess the behaviour of the Bayes Factor comparing a parametric family of finite mixtures against a nonparametric DPM alternative.

While the asymptotic behaviour of the marginal likelihood has been studied In Section \ref{sec:theory}, we study the asymptotic behavior of the marginal likelihood of the DPM for strongly identifiable emission distributions, when fitted to data assumed to arise from a finite location mixture. The obtained result implies the consistence of the Bayes Factor in this setting. While the asymptotic behaviour of a finite mixture is well-established (see for instance \cite{chambaz2008bounds} or \cite{drton2017bayesian}), equivalent results for the DPM and in particular an upper bound on the marginal likelihood, had not been derived before.

\section{Evidence approximation for finite mixtures}
\label{sec_1}

Assume that a random \textit{i.i.d} sample $\boldsymbol{y}=( y_1,\dots, y_n)$ arises from a finite mixture of $K$ distributions which densities $f(\cdot|\theta_k),\; k=1\dots,K$ are parametrized by a component-specific $\theta_k\in\Theta\subset\mathbb R^d$, $d\geq 1$. One can write the corresponding mixture likelihood function as
\begin{equation}
    \label{finmix}
    p_K(\boldsymbol y|\boldsymbol{\vartheta})=\prod_{i=1}^n\sum_{k=1}^K \varpi_kf(y_i|\theta_k)
\end{equation}
where $\boldsymbol\vartheta=(\boldsymbol\theta,\boldsymbol\varpi)$, $\boldsymbol \theta=(\theta_1,\dots,\theta_K)\in\Theta^K\subset \mathbb R^{d\times K}$, $\boldsymbol \varpi=(\varpi_1,\dots,\varpi_K)\in\Delta_{K-1}$, the $(K-1)-$dimensional simplex. \\
Another way of specifying a finite mixture model is by introducing latent variables $\boldsymbol z=(z_1,\dots,z_n)\in\{1,\dots,K\}^n$ that indicate cluster membership of the individual observations $(y_1,\dots,y_n)$. They allow for a convenient generative construction of the finite mixture model.
\begin{align}
\label{latentvar}
    \nonumber y_i|\boldsymbol\theta, z_i&\overset{i.i.d.}{\sim} f(.|\theta_{z_i}),\; i=1,\dots,n\\
    \nonumber z_1,\dots,z_n|\boldsymbol\varpi&\overset{i.i.d.}{\sim}Cat(\varpi_1,\dots,\varpi_K)\\
    \boldsymbol\varpi=(\varpi_1,\dots,\varpi_K)&\sim \mathcal D(\boldsymbol\alpha)\\
    \nonumber\theta_1,\dots,\theta_K&\overset{i.i.d.}{\sim} G_0
\end{align}
where $G_0$ is a prior on $\theta$ while $\mathcal D(\cdot)$ and $Cat(\cdot)$ denote the Dirichlet and the categorical distributions, respectively.

\subsection{Popular existing estimators of the marginal likelihood for finite mixtures}

Since the seminal article by \cite{tanner1987calculation}, a common strategy for sampling from the posterior distribution of a finite mixture model $\pi_K(\boldsymbol\vartheta|\boldsymbol y)$ takes advantage of the latent variable representation (\ref{latentvar}). Popular Gibbs sampling schemes have been developed for this purpose by \cite{diebolt1994estimation}, for example. However, efficient and reliable algorithms to estimate the marginal likelihood of such models, that scale with the number of mixture components $K$, still need to be derived. One can define this quantity by 
 

\begin{equation}
    \label{marginallikelihood}
    m_K(\boldsymbol y):=\int_{\Theta^K\times \Delta_{K-1}}p_K(\boldsymbol y|\boldsymbol\vartheta)\pi_K (d\boldsymbol\vartheta)
\end{equation}
where $\pi_K(\boldsymbol \vartheta)=G_0(\boldsymbol\theta)\mathcal D(\boldsymbol \varpi|\boldsymbol\alpha)$.
Integral (\ref{marginallikelihood}) is unfortunately intractable and difficult to estimate, in particular because of the $K!$ equiprobable modal configurations of the posterior distribution. This specific complexity is at the source of the development of \textit{ad hoc} estimation methods for finite mixture models. Below we describe the two most popular Monte-Carlo algorithms for marginal likelihood estimation of FM, namely Bridge Sampling (\cite{fruhwirth2019keeping}) and Chib's algorithm (\cite{chib1995marginal}) as well as its subsequent corrections as proposed by \cite{berkhof2003bayesian} and \cite{lee2016importance}. We explain why these algorithms suffer from the curse of dimensionality as $K$ increases. Sequential Monte Carlo (SMC) is also described in its finite mixture version as suggested by \cite{chopin2002sequential}. We then propose another correction to Chib's algorithm based on the partitioning structure implied by mixture models. We also adapt \cite{kong1994sequential}'s sequential imputation strategy to finite mixture model, which appears to be both robust to an increase in the number of mixture components $K$ and the number of observations $n$.

\paragraph{Chib's algorithm.} Introduced by \cite{chib1995marginal}, the core idea of this method relies on the simple Bayes identity
\begin{equation}
\label{Chibidentity}
    m_K(\boldsymbol y)=\frac{p_K(\boldsymbol y|\boldsymbol\vartheta_0)\pi_K(\boldsymbol\vartheta_0)}{\pi_K(\boldsymbol\vartheta_0|\boldsymbol y)}
\end{equation}
for some $\boldsymbol\vartheta_0\in\Theta^{K}\times\Delta_{K-1}$.\\
To estimate the posterior density $\pi_K(\boldsymbol\vartheta_0|\boldsymbol y)$ that is not available in closed-form, \cite{chib1995marginal} uses the output $\{\boldsymbol\vartheta^{(t)},\boldsymbol z^{(t)}\}_{t=1}^T$ of a Gibbs sampler targeting the posterior distribution of ($\boldsymbol\vartheta,\boldsymbol z)$ in a Rao-Blackwellized estimator 
\begin{equation}
    \label{raoblackwell}
    \hat \pi_K(\boldsymbol\vartheta_0|\boldsymbol y)=\frac1T\sum_{t=1}^T \pi_K(\boldsymbol\vartheta_0|\boldsymbol z^{(t)},\boldsymbol y)
\end{equation}
where $\boldsymbol\vartheta_0$ typically is the Maximum A Posteriori (MAP) estimator. Note that the densities $\pi_K(\cdot|\boldsymbol z,\boldsymbol y)$ are available in closed form for conjugate finite mixtures.
The posterior density estimator (\ref{raoblackwell}), that is plugged into (\ref{Chibidentity}), is unfortunately biased in the case of poor mixing of the Monte Carlo Markov Chain producing the sequence $\{\boldsymbol\vartheta^{(t)},\boldsymbol z^{(t)}\}_{t=1}^T$, as pointed out by \cite{neal1999erroneous}. Indeed, due to mixture posterior invariance to relabeling of components (see, e.g., \cite{fruhwirth2006practical}), there exists $K!$ equally likely posterior modes. When the Gibbs sampler does not visit evenly every one of the $K!$ modes, or in other words all $K!$ permutations of the MAP estimator $\boldsymbol z_0$, (\ref{raoblackwell}) is over-estimated, which in turn implies an underestimation of Chib's marginal likelihood estimator. This ill-behavior of the Gibbs sampler is in practice very common.
In the extreme case where the sampler only visits the mode represented by $\boldsymbol z_0$, Chib's estimator underestimates the marginal likelihood ‘exactly' by a factor $K!$. \\
For small values of $K$, however, an easy fix has been suggested by \cite{berkhof2003bayesian}. It consists in summing over all permutations of $\boldsymbol z^{(t)}$ for $t=1,\dots,T$ yielding the following estimator 
\begin{equation}
    \label{berkhof}
    \tilde\pi_K(\boldsymbol\theta_0|\boldsymbol y)=\frac{1}{K!T}\sum_{t=1}^T\sum_{s\in\mathfrak S(\{1,\dots,k\})}\pi_K(\boldsymbol\vartheta_0|s(\boldsymbol z^{(t)}),\boldsymbol y)
\end{equation}
where $\mathfrak S(\{1,\dots,k\})$ denotes the set of all permutations of $\{1,\dots,k\}$. \\
This estimator, although very accurate, cannot be computed in an appropriate time as soon as $K\geq 5$. A trick suggested by \cite{marin2008approximating} is to sample at random $100$ permutations for such values of $K$ in order to keep computational time below a reasonable threshold. Unfortunately, 100 random permutations should not be enough to keep the good variance properties of Chib's estimator when $K\geq8$, as for such values, $K!\gg 100$.

\paragraph{Bridge Sampling.} Bridge sampling is a very popular generalisation of Importance Sampling introduced by \cite{meng1996simulating}. It relies on identity \eqref{eq:BSidentity} that holds for all positive function $\alpha(\boldsymbol \vartheta)$ such that $\int_\Theta \alpha(\boldsymbol \vartheta) q_K(\boldsymbol\vartheta)\pi_K(\boldsymbol\vartheta|\boldsymbol y)d\boldsymbol\vartheta>0$, where $q_K(\boldsymbol\theta)$ is an importance density aiming at approximating the posterior distribution $\pi_K(\boldsymbol\vartheta|\boldsymbol y)$,
\begin{equation}\label{eq:BSidentity}
    \int_\Theta \alpha(\boldsymbol \vartheta) q_K(\boldsymbol\vartheta)\pi_K(\boldsymbol\vartheta|\boldsymbol y)d\boldsymbol\vartheta=\int_\Theta \alpha(\boldsymbol \vartheta) q_K(\boldsymbol\vartheta)\frac{p_K(\boldsymbol y|\boldsymbol\vartheta)\pi_K(\boldsymbol\vartheta)}{m_K(\boldsymbol y)}d\boldsymbol\vartheta
\end{equation}
which gives the Bridge Sampling estimator of the marginal likelihood 

\begin{equation*}
    m_K(\boldsymbol y)=\frac{\mathbb E_{q_K(\boldsymbol\vartheta)}(\alpha(\boldsymbol \vartheta)p_K(\boldsymbol y|\boldsymbol\vartheta)\pi_K(\boldsymbol\vartheta))}{\mathbb E_{\pi_K(\boldsymbol \vartheta|\boldsymbol y)}(\alpha(\boldsymbol\vartheta)q_K(\boldsymbol\vartheta))}
\end{equation*}

Assuming one can obtain $T_1$ i.i.d draws $\boldsymbol\vartheta_1^{q},\dots,\boldsymbol\vartheta_{T_1}^{q}$ from $q_K(\boldsymbol\theta)$ and $T_2$ (auto-correlated) MCMC samples $\boldsymbol\vartheta_1^{\pi},\dots,\boldsymbol\vartheta_{T_2}^{\pi}$ from $\pi_K(\boldsymbol\vartheta|\boldsymbol y)$ respectively, \cite{meng1996simulating} derive the optimal choice for $\alpha(\boldsymbol\vartheta)$ as

\begin{equation*}
    \alpha(\boldsymbol\vartheta)=(N_1q_K(\boldsymbol\vartheta)+N_2^\ast\pi_K(\boldsymbol\vartheta|y))^{-1}
\end{equation*}
with $N_2^*$ the Effective Sample Size (ESS) of the MCMC sample from the posterior. As this optimal choice of $\alpha(\boldsymbol\vartheta)$ involves the unknown marginal likelihood $m_K(\boldsymbol y)$, the BS estimator is written recursively as 

\begin{equation}
    \label{BS}
    \hat m_{K,n}^{BS}(\boldsymbol y)=\frac{\sum_{t=1}^{T_1}\frac{p_K(\boldsymbol y|\boldsymbol\vartheta_t^q)\pi_K(\boldsymbol\vartheta_t^q)}{N_1q_K(\boldsymbol\vartheta_t^q)+N_2^*\frac{p_K(\boldsymbol y|\boldsymbol\vartheta_t^q)\pi_K(\boldsymbol\vartheta_t^q)}{\hat m_{K,n-1}^{BS}(\boldsymbol y)}}}{\sum_{t=1}^{T_2}\frac{q_K(\boldsymbol\vartheta_t^\pi)}{N_1q_K(\boldsymbol\vartheta_t^\pi)+N_2^*\frac{p_K(\boldsymbol y|\boldsymbol\vartheta_t^\pi)\pi_K(\boldsymbol\vartheta_t^\pi)}{\hat m_{K,n-1}^{BS}(\boldsymbol y)}}}
\end{equation}
where $m_{K,0}^{BS}$ is usually a simple IS estimator with $q(\boldsymbol\vartheta)$ as importance distribution.
As $n$ grows, one can retrieve the optimal BS estimator $m_{K}^{BS}(\boldsymbol y)$ as $\underset{n\rightarrow\infty}{\lim}\hat m_{K,n}^{BS}(\boldsymbol y)= m_{K}^{BS}(\boldsymbol y)$.

\cite{fruhwirth2019keeping} provides a comprehensive review on how to successfully apply the Bridge Sampling framework to finite mixtures. As can be expected, the difficulty lies in deriving a good importance density $q_K(\boldsymbol\vartheta)$ that yields a good approximation to the notoriously complex posterior distribution of a finite mixture model. In the same vein as \cite{berkhof2003bayesian}'s permuted Chib's estimator (\ref{berkhof}), \cite{fruhwirth2019keeping} proposes the following choice for $q(\boldsymbol\vartheta)$:

\begin{equation*}
    q(\boldsymbol\vartheta)=\frac{1}{K!T_0}\sum_{t=1}^{T_0}\sum_{s\in\mathcal \sigma(\{1,\dots,k\})}\pi_K(\boldsymbol\vartheta|s(\tilde{\boldsymbol z}^{(t)}),\boldsymbol y)
\end{equation*}
where $T_0\ll T_1$ and $\{\tilde{\boldsymbol z}^{(t)}\}_t^{T_0}$ is drawn with replacement from the Markov chain targeting the posterior $\{\boldsymbol z^{(t)}\}_t^{T_1}$. Although $T_0$ is a smaller number than $T_1$, it is clear that this estimator suffers from the same type of computational shortcomings as Chib's symmetrized estimator. For a given number of observations $n$, computing estimator (\ref{BS}) comes at the cost of evaluating $T_0\times K!\times(T_1+ T_2)$ times the augmented posterior $\pi_K(\boldsymbol\vartheta|\boldsymbol z,\boldsymbol y)$ distribution.

Although both Chib's and Bridge Sampling approaches are well-established and known to provide reliable estimators of the marginal likelihood of finite mixtures, their computational time grows exponentially with the number of components $K$.

\paragraph{Sequential Monte-Carlo (SMC).} Formalised by \cite{del2006sequential}, SMC is based on Sequential Importance Sampling (SIS). By sequentially sampling from a collection of distributions $\{\pi_t\}_{t=0}^T$ where $\pi_0$ is usually the prior distribution and $\pi_T$ is the posterior distribution of interest, the core idea of SMC is to effectively create a bridge from the prior to the posterior distribution. At each step $t=1,\dots,T-1$ of the algorithm, weighted samples $\{\boldsymbol\vartheta^{(i)}_t,w^{(i)}_t\}_{i=1}^N$ from $\pi_t$ are reweighted through an importance sampling step so that their distribution becomes approximately $\pi_{t+1}$.
\begin{equation}\label{eq:SMCweight}
    w_{t+1}^{(i)}=\frac{\pi_{t+1}(\boldsymbol\vartheta^{(i)}_t)}{\pi_{t}(\boldsymbol\vartheta^{(i)}_t)}
\end{equation}
Equation \eqref{eq:SMCweight} above stresses the importance of choosing successive distributions $\pi_t$ and $\pi_{t+1}$ as not being too dissimilar in order to ensure good variance properties. One common strategy is to choose the sequence of distributions $\{\pi_t\}_{t=0}^T$ to be a tempered version of the posterior distribution. More precisely, a tempering sequence $(\lambda_t)_t$ is leading to the ‘tempered' posterior distributions  
\begin{equation}\label{eq:SMCtemperatures}
    \pi_t(\boldsymbol\vartheta):=\pi_t(\boldsymbol\theta|y)\propto \pi_K(\boldsymbol\vartheta)p_K(\boldsymbol y|\boldsymbol \vartheta)^{\lambda_t}
\end{equation}
where $$\lambda_0=0<\lambda_1<\dots<\lambda_T=1$$
Based on readily available estimates of the Effective Sample Size (ESS), \cite{buchholz2021adaptive} gives a fully adaptive method to choose the tempering sequence $(\lambda_t)_t$. Another alternative is to use data-tempered posterior distributions in which batches of data are sequentially incorporated into the posterior. This approach is closely related to the Sequential Importance Sampling (SIS) strategy we suggest in the next sub-section. Therefore, we here use the tempering strategy given in \eqref{eq:SMCtemperatures} so that both methods are described in this article.\\
A multinomial resampling step is usually implemented in order to select the particles with relatively higher weights \eqref{eq:SMCweight}. This unfortunately can cause particle degeneracy that occurs when only a few particles $\{\boldsymbol\vartheta^{(i)}_t\}$ with a large weight are selected during the resampling step. To compensate for this frequent shortcoming, a \textit{mutation} step is added to the algorithm in which each particle is moved through an MCMC $\pi_{t+1}$-invariant kernel $K_t$. In practice, kernel $K_t$ is applied to each particles $M$ number of times to ensure good mixing, where $M$ is a user-defined number.\\
Although SMC algorithms are known to be powerful in that they are robust to the curse of dimensionality, all these tuning steps make SMC a tedious method to implement. Following the successful application of SMC to finite mixture models of \cite{chopin2002sequential}, we make use of a random walk Metropolis-Hasting kernel.

As shown in \cite{del2006sequential}, an unbiased estimate of the marginal likelihood $m_K(\boldsymbol y)$ is given as a by-product of SMC as

\begin{equation}
    \label{SMCmarginalLlk}
    \hat m^{SMC}_K(\boldsymbol y)=\prod_{t=0}^T\frac{1}{N}\sum_{i=1}^N w_t^{(i)}
\end{equation}

\begin{algorithm}[H]
\setstretch{1.35}
\label{algoSMCFM}
\caption{\textit{Tempered SMC}}
\textbf{Input} : Number of particles $N$, prior distribution $\pi_K$, Markov kernels $K_t$ that are $\pi_{t}$ invariant, where $\pi_{t}(\boldsymbol\vartheta|\boldsymbol y)\propto \pi_K(\boldsymbol\vartheta)p_K(\boldsymbol y|\boldsymbol\vartheta)^{\lambda_t}$ \\
\textbf{Initialisation} : $t=0$, $\lambda_0=0$\\
\While{$\lambda_t<1$}{
    \If{t=0}{
        \For {$i=1,\dots,N$}{
            Sample $\boldsymbol\vartheta_0^{(i)}\sim\pi_K$
        }
        
    }
    \Else{
        \For {$i=1,\dots,N$}{
            \textit{Mutation.} Move particles $\boldsymbol\vartheta_t^i\sim K_{t}(\Tilde{\boldsymbol\vartheta}_{t-1}^{(i)},d\boldsymbol\vartheta)$ (Repeat $M$ times)
        }
       
    }
    Find the next temperature $\lambda_{t+1}>\lambda_{t}$ adaptively using the current state of the particules.
    
    \textit{Reweighting.} 
    \For{$i=1,\dots,N$}{
        $w_{t+1}^{(i)}=p_K(\boldsymbol y|\boldsymbol \vartheta_t^{(i)})^{\lambda_{t+1}-\lambda_{t}}$  
    } 
    \textit{Resampling.} Resample with replacement from $\{\boldsymbol\vartheta_{t+1}^{(i)},w_{t+1}^{(i)}\}_{i=1}^n$ to obtain a new sample with equal weights $\{\Tilde{\boldsymbol\vartheta}_{t+1}^{(i)},\frac 1N\}$
    
    $t=t+1$
    
}
\textbf{Output} : $\hat m^{SMC}_K(\boldsymbol y)=\prod_{t}\frac{1}{N}\sum_{i=1}^N w_t^{(i)}$
\end{algorithm}

\subsection{Proposed estimators}
\label{sec:novelEstimFM}
In this section, we present two novel estimators of the marginal likelihood for conjugate finite mixture models. One of them is inspired by Chib's method and takes advantage of the partitioning of the data induced by such models. It provides efficient and robust results as well as a dramatic reduction in computational time even for $K>5$. The second one is an application of \cite{kong1994sequential}'s sequential imputation algorithm, which is surprisingly not widely used as a solution to the problem of marginal likelihood estimation for finite mixtures.
\paragraph{Chib's estimator on the partition.}
Partitioning comes as a natural by-product of a classical Gibbs Sampler for finite mixture models, when the additional cluster membership latent variable $\boldsymbol z$ is introduced. Indeed, if the output of such an MCMC algorithm is $\{\boldsymbol\vartheta^{(t)},\boldsymbol z^{(t)}\}_{t=1}^T$, we denote by $\mathcal C(z^{(t)})$ the partition on $[n]=\{1,\dots,n\}$ induced by $z^{(t)}=(z_1^{(t)},\dots,z_n^{(t)})$, for all $t=1,\dots,T$.\\
For example, if $n=4$ and $\boldsymbol z=(1,2,1,3)$, then the corresponding partition is $\mathcal C(\boldsymbol z)=\{\{y_1,y_3\},\{y_4\},\varnothing,\{y_2\}\},$, i.e observations $y_1$ and $y_3$ are in the same cluster whereas $y_2$ and $y_4$ are the only members of their respective cluster.

The core idea of the {ChibPartition} estimator that we propose in this article, is to apply the marginal likelihood identity (\ref{Chibidentity}) to an estimator of the MAP partition. This reads

\begin{equation}
    \label{chibpartitions}
        p(y)=\frac{p_K(y|\mathcal C^0)\pi_K(\mathcal C^0)}{\pi_K(\mathcal C^0|y)},\quad \mathcal C^0\in\mathcal P_K([n])
\end{equation}
where we denote by $\mathcal P_K([n])$ the number of partitions of $\{1,\dots,n\}$ with at most $K$ parts.\\
Note that the induced prior on the partitions $\pi_K(\mathcal C)$ is available in closed form provided that a Dirichlet prior is used for the weights $\boldsymbol\varpi$. It can indeed be written as 
\begin{align}
\label{priorpartition}
    \nonumber\pi_K(\mathcal C(\boldsymbol z))&=\frac{K!}{(K-K_+)!}\pi_K(\boldsymbol z)\\
    \nonumber&=\frac{K!}{(K-K_+)!}\int \pi_K(\boldsymbol z|\boldsymbol\pi) d\mathcal D(\boldsymbol\pi|\boldsymbol\alpha)\\
    &=\frac{K!}{(K-K_+)!}\frac{\Gamma\left(\sum_{j=1}^K\alpha_j\right)}{\Gamma\left(\sum_{j=1}^K\alpha_j+n\right)}\prod_{j=1}^{K}\frac{\Gamma(n_j+\alpha_j)}{\Gamma(\alpha_j)}
\end{align}
where $n_j=\sum_{i=1}^n \1_{\{z_i=j\}}$, i.e the number of observations assigned to cluster $j$ for all $j=1,\dots,K$, and $K_+=\sum_{j=1}^K \1_{\{n_j>0\}}$, which is the number of non-empty clusters implied by a given allocation vector $\boldsymbol z$. 

Now for a conjugate prior $G_0$ on $\boldsymbol\theta$, the likelihood of a partition is available in closed form and given by
\begin{align}
\label{likelihoodpartition}
     p(\boldsymbol y|\mathcal C(\boldsymbol z))=p(\boldsymbol y|\boldsymbol z)=\prod_{j=1}^K{\int_\Theta\prod_{i: z_i=k}p(y_i|\theta)G_0(d\theta)}:=\prod_{j=1}^Km({C_k(\boldsymbol z)})
\end{align}
with the convention that $m(\varnothing)=1.$

The posterior density of a partition $\mathcal C$ is unfortunately not available in closed form. However, we can estimate it with the following simple Monte Carlo estimator readily computable from the Gibbs sampler output.

\begin{equation}
    \label{posteriorpartition}
    \hat\pi_K(\mathcal C^0|\boldsymbol y)=\frac1T\sum_{t=1}^T\1_{\{C^0\doteq C(z^{(t)})\}}
\end{equation}
for some sample $\{z^{(t)}\}$ distributed according to the posterior distribution and where we use $\doteq$ to denote an equivalence relation between two partitions. For example, $$\{\{y_1,y_3\},\{y_2\},\{y_4\}\}\doteq \{\{y_2\},\{y_4\},\{y_1,y_3\}\}$$ as both partitions imply the same partitioning of the data, up to a permutation of the clusters' index.
From a computational point of view, note that comparing two partitions with the equivalence relation $\doteq$ is a simple $\mathcal O(n\log n)$ operation. Below is given the pseudocode of the ChibPartition estimator.
\begin{algorithm}
\setstretch{1.35}
\caption{\textit{ChibPartition estimator}}\label{algochibpartition}
\textbf{Input} : $(\boldsymbol z^{(t)})_{t=1}^T$ from an MCMC at stationarity targeting $\pi_K(\boldsymbol z|\boldsymbol y)$ \\
\For {$t=1,\dots,T$}{
    $\text{Compute } \tilde\pi(\mathcal C(\boldsymbol z^{(t)})|\boldsymbol y)= p_K(y|\mathcal C(\boldsymbol z^{(t)}))\pi_K(\mathcal C(\boldsymbol z^{(t)}))$ using (\ref{likelihoodpartition}) and (\ref{priorpartition})
    
    $C^0=\underset{t=1,\dots,T}{\text{argmax }}\{\tilde\pi(\mathcal C(\boldsymbol z^{(t)})|\boldsymbol y)\}$
}
Set $\hat\pi(\mathcal C^0|\boldsymbol y)=0$\\
\For {$t=1,\dots,T$}{
    $\text{Compute } \hat\pi(\mathcal C^0|\boldsymbol y)=\hat\pi(\mathcal C^0|\boldsymbol y)+ \1_{\{C^0\doteq C(z^{(t)})\}}/T$
}
\textbf{Output} : $m_K(\boldsymbol y)={p_K(\boldsymbol y|\mathcal C^0)\pi_K(\mathcal C^0)}/{\hat\pi(\mathcal C^0|\boldsymbol y)}$
\end{algorithm}

Possible shortcomings of Algorithm \ref{algochibpartition} stem from the cardinality of the set $\mathcal P_K([n])$ potentially making the estimation of the posterior density difficult. Note that there exists no simple expression giving the exact value of $|\mathcal P_K([n])|$ for all $k$ and $n$ but one can write
\begin{equation}
\label{cardinality}
    |\mathcal P_K([n])|=\sum_{k=1}^K S(n,k):=\sum_{k=1}^K\frac{1}{k!}\sum_{l=0}^k(-1)^l {k\choose l}(k-l)^n
\end{equation}
where $S(n,k)$ are the Stirling numbers of the second kind, counting the number of ways to partition a set of $n$ distinguishable objects into $k$ nonempty subsets (see \cite{graham1989concrete} for example).
The number of possible partitions given by $(\ref{cardinality})$ is potentially very large. For instance, when using a finite mixture with $K=8$ components on $n=82$ data points, $|\mathcal P_K([n])|\approx 2.80\times10^{69}$. It is therefore crucial to choose $C^0$ to be the estimated MAP or any other high-posterior probability partition to retrieve good variance for estimator (\ref{posteriorpartition}). 
As simulations in Section \ref{secsimulationsFM} tend to show, such a strategy appears to be sufficient to ensure a robust estimator of the posterior density.

Compared to the other corrections of \cite{chib1995marginal}'s original method that have been proposed in the past such as \cite{berkhof2003bayesian}'s \textit{Chib Permutation Estimator} ({‘ChibPermut'} hereafter) or \cite{marin2008approximating}'s \textit{Chib Random Permutation Estimator} ({‘ChibRandPermut'} hereafter), our proposed solution has a considerably lower computational time. Note that those three methods only differ in the way they estimate the posterior density function at a given point in the parameters' space. Therefore it is enough to compare the computational cost of this estimation as is given by Table \ref{tablecomputationaltime}.

\begin{table}[H]
    \centering
    \begin{tabular}{c|c}
    &Complexity of posterior estimation\\
    \hline
      {ChibPermut}   & $\mathcal O(TK!)$ \\
      {ChibRandPermut}   & $\mathcal O(TR)$\\ 
      {ChibPartition} & $\mathcal O(T)$\\
    \end{tabular}
    \caption{Computational complexity of estimating the posterior density function for three corrections of \cite{chib1995marginal}'s original marginal likelihood estimator: $T$ is the length of the Markov Chain targeting the posterior distribution, $K$ is the number of mixture components, and $R$ is the number of random permutations chosen for the \textit{ChibRandPermut} estimator.}
    \label{tablecomputationaltime}
\end{table}

Thanks to the ergodicity of the Gibbs output and using the auto-correlation consistent variance estimator given by \cite{newey1986simple}, an estimate of the variance of (\ref{posteriorpartition}) is given by 
$$\hat{V}(\hat\pi_K(\mathcal C^0|\boldsymbol y))=\frac 1T\left( \sigma_0+2\sum_{s=1}^q\left(1-\frac{s}{q+1}\right) \sigma_s\right)$$
where
$$\sigma_s=\frac 1T\sum_{t=s+1}^T\left(\1_{\{C^0\doteq C(z^{(t)})\}}-\hat\pi_K(\mathcal C^0|\boldsymbol y)\right)^2$$
for $q$ large enough.\\
The Delta method immediately yields the following estimator for the variance of the marginal likelihood ChibPartition estimator.
$$\hat V(\log \hat m_K(\boldsymbol y))=\hat V(\log \hat\pi_K(\mathcal C^0|\boldsymbol y))=(\hat\pi_K(\mathcal C^0|\boldsymbol y))^{-2}{\hat V}(\hat\pi_K(\mathcal C^0|\boldsymbol y))\,.$$
\paragraph{Sequential Importance Sampling (SIS).} \cite{kong1994sequential} addresses the issue of missing data problems by sequential imputation, using a latent variable $\boldsymbol z$, representing the missing part of the data. The authors show that for a particular choice of importance distribution $\pi^*$, a Sequential Importance Sampling (SIS) procedure yields a direct estimator of the marginal likelihood $m(\boldsymbol y)$. This is applicable whenever one can sample easily from distributions $p(z_i|\boldsymbol y_{1:i},\boldsymbol z_{1:i-1})$ for all $i\geq 2$ where $\boldsymbol z_{1:i}=(z_1,\dots,z_i)$ and whenever the prequential predictive densities $p(y_i|\boldsymbol y_{1:i-1},\boldsymbol z_{1:i-1})$ are available in closed form for all $i\geq 2$. 
Indeed, let us define $\pi^*$ as an approximation to the posterior distribution in which the latent variable $\boldsymbol z$ is imputed sequentially,
\begin{equation*}
    \label{posteriorImputation}
    \pi^*(z_1,\dots,z_n|\boldsymbol y)=p(z_1|y_1)\prod_{i=2}^n p(z_i|\boldsymbol y_{1:i},\boldsymbol z_{1:i-1})
\end{equation*}
Then,
\begin{align*}
    \frac{\pi(z_1,\dots,z_n|\boldsymbol y)}{\pi^*(z_1,\dots,z_n|\boldsymbol y)}&=\frac{p(\boldsymbol z,\boldsymbol y)}{p(\boldsymbol y)}\frac{p(y_1)}{p(z_1,y_1)}\frac{p(z_1,y_1,y_2)}{p(z_1,y_1,z_2,y_2)}\dots\frac{p(\boldsymbol z_{1:n-1},\boldsymbol y)}{p(\boldsymbol z,\boldsymbol y)}\\
    &= \frac{w(\boldsymbol z,\boldsymbol y)}{p(\boldsymbol y)}
\end{align*}
where $w(\boldsymbol z,\boldsymbol y):=p(y_1)\prod_{i=2}^n p(y_i|\boldsymbol z_{1:i-1},\boldsymbol y_{1:i-1})$.\\
Hence, if $\{\boldsymbol z^{(t)}\}_{t=1}^T$ are drawn sequentially from $\pi^*$, the following quantity is an unbiased estimator of the marginal likelihood of the data $p(\boldsymbol y)$ 
\begin{equation}
    \label{SISestimator}
    \hat m_K^{SIS}(\boldsymbol y)=\frac{1}{T}\sum_{i=1}^Tw(\boldsymbol z^{(t)},\boldsymbol y)
\end{equation}
The link to the latent cluster membership variable $\boldsymbol z$ defined in (\ref{latentvar}) is direct and it is straightforward to derive the necessary quantities to apply the idea of \cite{kong1994sequential} to mixture models, as also noticed by \cite{carvalho2010particle}. Despite the well-known efficiency of such samplers, note that SIS, and, more generally, SMC approaches, are not as popular as Chib's algorithm or Bridge sampling, the latter being the favoured method of the STAN community.
However, for conjugate models, sampling from $\pi^*$ and evaluating $w(\boldsymbol z,\boldsymbol y)$ is simple (see details in Appendix \ref{sec:appendixSISFM}).

As remarked by \cite{irwin1994sequential}, a simple application of the Delta method yields the following estimator for the standard deviation of $\log \hat m^{SIS}_K(\boldsymbol y)$.
$$\hat {sd}(\log \hat m^{SIS}_K(\boldsymbol y))=\frac{1}{\sqrt{M}}\frac{\hat {sd}(w)}{\hat m^{SIS}_K(\boldsymbol y)}$$
where $\hat{sd}(w)$ denotes the sample standard deviation of $\{w(\boldsymbol z^{(t)},\boldsymbol y)\}_t$.
\begin{algorithm}[H]
\setstretch{1.35}
\caption{\textit{SIS}}\label{algoSMCFM}
\textbf{Input} : Number of iterations $T$ \\
\For{t=1,\dots,T}{
    Sample $z_1^{(t)}\sim Cat(\alpha_1/\sum_{j=1}^K \alpha_j,\dots,\alpha_K/\sum_{j=1}^K \alpha_j$)\\
    Compute $p(y_1)=m(\{y_1\})$\\
    Set $w^{(t)}\leftarrow p(y_1)$\\
    \For{i=2,\dots,n}{
        \For{k=1,\dots,K}{
            Compute $\gamma_k=\frac{m(C_k(\boldsymbol z_{1:i-1}^{(t)})\cup \{y_i\})}{m(C_k(\boldsymbol z_{1:i-1}^{(t)}))}\frac{n_k(\boldsymbol z_{1:i-1}^{(t)})+\alpha_k}{\sum_{j=1}^K n_j(\boldsymbol z_{1:i-1}^{(t)})+\alpha_j}$\\
        }
        Sample $z_i^{(t)}\sim Cat(\gamma_1/\sum_{k=1}^K \gamma_k,\dots,\gamma_K/\sum_{k=1}^K \gamma_k)$\\
        Set $w^{(t)}\leftarrow w^{(t)}\sum_{k=1}^K\gamma_k$
    }
}
Return $1/T\sum_{t=1}^Tw^{(t)}$
\end{algorithm}

\subsection{Simulation study}
\label{secsimulationsFM}
\subsubsection{Experiment 1 : Galaxy data.} The first experiment we design aims at assessing the relative performance of our suggested ChibPartition and SIS algorithms in a basic setting. As is usually done in the mixture modelling community, we use the benchmark \texttt{galaxies} data set that contains the radial velocity of 82 galaxies, was introduced by \cite{postman1986probes}. A long series of articles addressing the issue of evidence computation in mixture models usually evaluate their results on this data (see e.g. \cite{chib1995marginal}, \cite{richardson1997bayesian}, \cite{lee2016importance}, \cite{fruhwirth2019keeping}). 

A conjugate mixture of normal distributions is implemented. We use the \textit{conditionally conjugate} normal-inverse gamma prior for the location and scale parameters $\boldsymbol\theta=(\boldsymbol\mu,\boldsymbol\sigma^2)$ as described in \cite{fruhwirth2006finite} that yields for all $k=1,\dots,K$
\begin{equation}
    \begin{aligned}
    \label{condConjPrior}
    \sigma_k^2&\sim \Gamma^{-1}(a_0,b_0)\\
    \mu_k|\sigma_k^2&\sim\mathcal N(\mu_0,\sigma_k^2/\lambda_0)
    \end{aligned}
\end{equation}
where $\Gamma^{-1}$ is the inverse gamma distribution in the shape and scale parametrization. The hyperparameters ($a_0,b_0,\mu_0,\lambda_0)$ are derived empirically following recommendation from \cite{raftery1996hypothesis} : $a_0=1.28,b_0=0.36(\overline{y^2}-\overline y^2),\mu_0=\bar y,1/\lambda_0=(y_{max}-y_{min})/2.6$. Note that this choice of prior ensures that $\pi_K(\boldsymbol\vartheta|\boldsymbol y,\boldsymbol z):=p(\boldsymbol\mu,\boldsymbol\sigma^2|\boldsymbol y,\boldsymbol z)$ is available in closed form, which is a prerequisite to most of the algorithms we wish to implement. 

Figure \ref{fig:boxplotFMgalaxy} shows boxplots of the estimates of the marginal likelihood given by the different methods described in the previous section for an increasing number of mixture components $K$. In each scenario, the simple arithmetic mean estimator computed with a very large number of simulations is included. It is defined as
\begin{equation}
    \label{eq:arithmeticMean}
    \hat m_K^{AM}(\boldsymbol y)=\frac1T\sum_{t=1}^T p_k(\boldsymbol y|\boldsymbol \vartheta^{(t)})
\end{equation}
given a sample $\{\boldsymbol\vartheta\}_{t=1}^T$ from the prior $\pi_K(\boldsymbol\vartheta)$. This prohibitively time-consuming estimator is here solely for benchmark purposes and used as a reference for the other methods. The harmonic mean estimator is also included in this comparative study. It is defined as
\begin{equation}
    \label{eq:harmonicMean}
    \hat m_K^{HM}(\boldsymbol y)=\left\{\frac1T\sum_{t=1}^T 1/p_k(\boldsymbol y|\boldsymbol \vartheta^{(t)})\right\}^{-1}
\end{equation}
given a sample $\{\boldsymbol\vartheta\}_{t=1}^T$ from the posterior $\pi_K(\boldsymbol\vartheta|\boldsymbol y)$.

Except for the arithmetic mean, all other algorithms are allocated as much time as required for them to converge, provided this time is reasonable. For instance, Bridge Sampling and the fully-permuted Chib's estimator (ChibPerm) are not included for $K=6$ and $K=8$ as they fail to converge in a time comparable to the other methods. Hence, Figure \ref{fig:boxplotFMgalaxy} only provides insights about which of the methods are able to provide reliable estimates for an increasing number of mixture components $K$.

\begin{sidewaysfigure}[]
        \centering
        \includegraphics[scale=0.4]{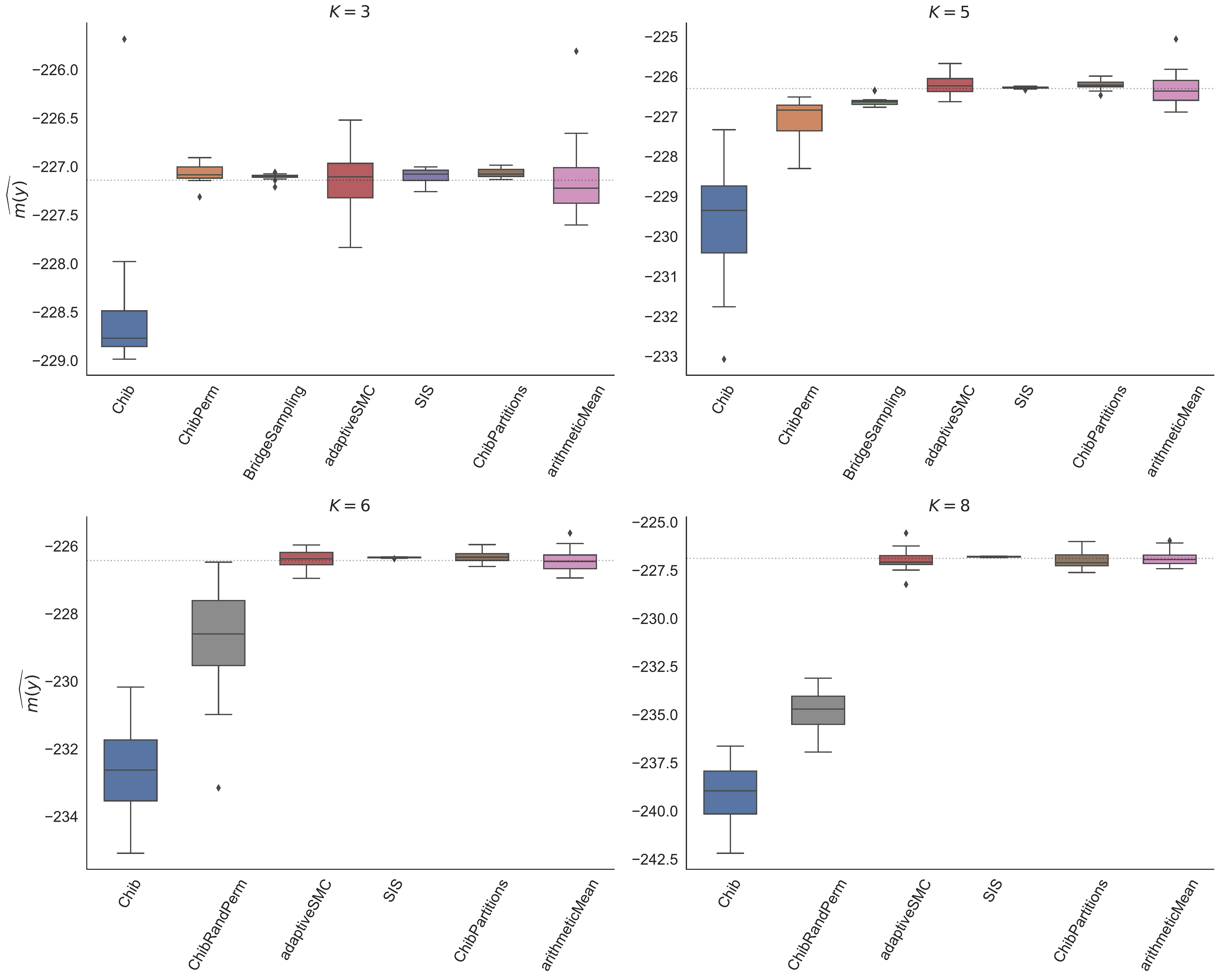}
        \caption{Boxplot of the marginal likelihood estimators 20 repetitions each. Dashed line : mean of the Arithmetic Mean estimator. The tuning parameters are given in Table \ref{tab:hyperParamFM} in Appendix.}
        \label{fig:boxplotFMgalaxy}

    \end{sidewaysfigure}

    For $K=3$, although most estimates agree on a common value for $\widehat{m(x)}$, Chib's method is almost exactly off by a factor $\log 3!\approx 1.79$, which is the consequence of an almost complete lack of label switching during the Gibbs sampling step, as discussed earlier. The sum over all permutations produced by the fully-permuted Chib's estimator (ChibPerm) makes up for this bias. We see that as $K$ grows, all the classical methods except for adaptive SMC fail to estimate the marginal likelihood while our candidates ChibPartition and SIS are consistently pointing to the reference value given by the arithmetic mean estimator.
\begin{table}[]
    \centering
    \begin{tabular}{c|c}
    &Time in seconds\\
    \hline
       ChibPartition  & 425.22 \\
       SIS & 472.87\\
       SMC  & 7539.92\\
       BS & 57983.62\\
    \end{tabular}
    \caption{Average time for $K=5$ required to get one repetition with a precision similar to the boxplot of Figure \ref{fig:boxplotFMgalaxy}. ChibPartition is computed with $T=95000$ iterations after a burn-in of 5000. SIS is iterated 6000 times (i.e $T=6000$). Adaptive SMC has 10000 particles and the \textit{mutation} MCMC is iterated 10 times. For Bridge Sampling, $T_1=1000$ after a burn-in period of 1000, $T_2=1000$ and $T_0=100$}
    \label{tab:tabTimeFMK5}
\end{table}
For $K=5$, Table \ref{tab:tabTimeFMK5} gives the time taken by each of the four successful algorithms (ChibPartition, adaptive SMC, SIS, and bridge sampling) to yield one estimate of the marginal likelihood. The time given corresponds to the average computational time over 20 repetitions of each estimator on one core Intel(R) Xeon(R) CPU E5-2630 v4 @ 2.20GHz. As is expected theoretically, one can observe the great decrease in computational time offered by ChibPartition and SIS with respect to Bridge Sampling and the fully permuted Chib's estimator. Although it is not done in this experiment, note that it is straightforward to parallelize the embarrassingly parallel SIS algorithm and thus to further reduce its computational time. Figure \ref{fig:MSETimeFMgalaxy} shows the evolution of the Mean Squared Error (MSE) as a function of time for the 5 component-mixture model on the \texttt{Galaxy} data. Note that methods SIS and SMC are not implemented in their parallelized version. Despite this, SIS clearly outperforms the other algorithms, closely followed by ChibPartition.

Figure \ref{fig:galaxyNumComp} gives boxplots of the marginal likelihoods estimators given by SIS for different values of $K$ and indicates that a mixture of 5 components is best supported by the $\texttt{galaxy}$ data, for our choice of prior.
\begin{figure}
        
        \centering
        \includegraphics[scale=0.7]{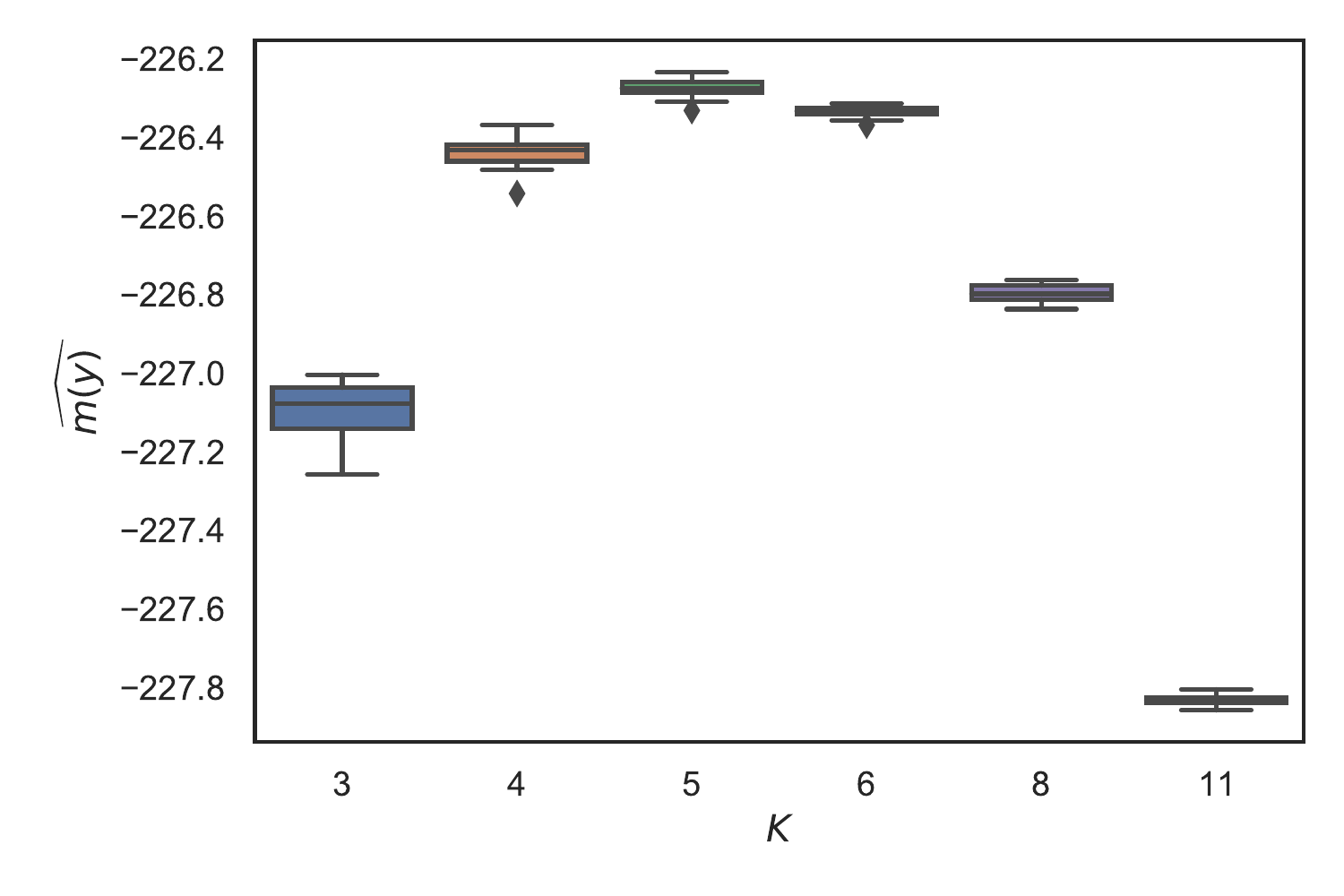}
        \caption{\texttt{Galaxy} data. Boxplots of the  SIS marginal likelihood estimators for different values of $K$.  20 repetitions each.}
        \label{fig:galaxyNumComp}
    \end{figure}

\begin{figure}
        \centering
        \includegraphics[scale=0.46]{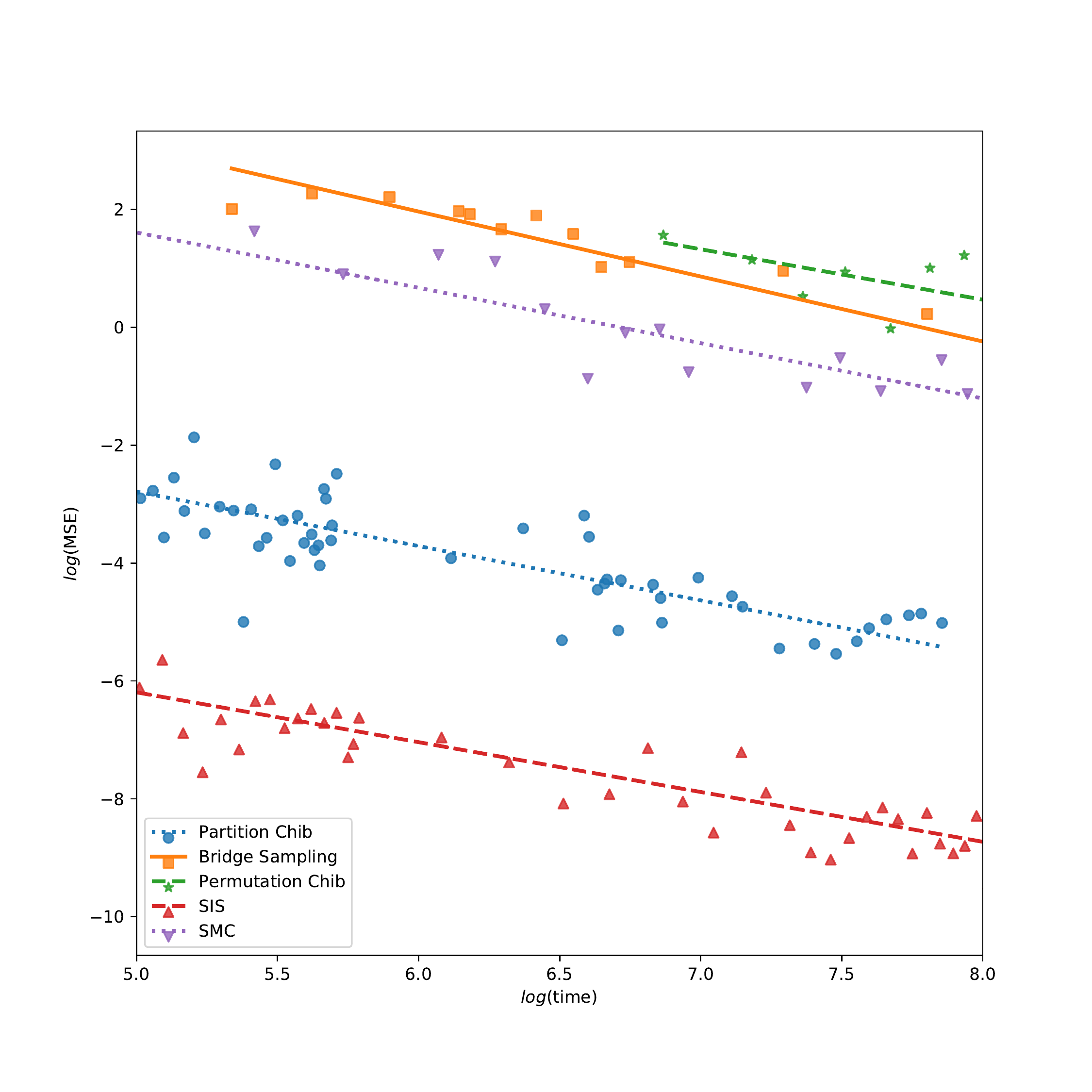}
        \caption{\texttt{Galaxy} data. MSE vs time with regression lines.}
        \label{fig:MSETimeFMgalaxy}
    \end{figure}

\subsubsection{Experiment 2 : Synthetic data, $n=1000$ and $n=2000$.} For more realistic applications, our goal is to find which algorithms scale well as both $K$ and $n$ get large. To our knowledge, no earlier work has been conducted towards identifing reliable estimators for this kind of challenging scenarios. 

We generate two data sets of respectively 1000  and 2000 observations from a six-component mixture of Gaussians. Given the results of the previous experiments in which all but three methods failed to converge in a reasonable time, we only consider ChibPartition, SIS, SMC and Bridge sampling for this more difficult scenario. As in the previous experiment, the conditionally conjugate prior (\ref{condConjPrior}) is chosen.
The tuning parameters used in Figure \ref{fig:boxplotSynthData} and given in Table \ref{tab:hyperParamSynthFM} are chosen so that the running times are comparable.

For the most simple scenario ($n=1000$, $K=3$), ChibPartition is showing a very large variance, probably due to the high cardinality of the set of partitions $\mathcal P_3([1000])$ compared to the number of iterations of the Gibbs sampler. Adaptive SMC is showing good results until $n=2000$ and $K=13$ where its variance becomes very large. On the other hand, SIS is consistently the best estimator in terms of variance, whereas adaptive SMC breaks down in the most demanding experimental setting.\\
The good performance of Bridge sampling should be noted when $n$ is large and $K=3$, although it cannot converge in a decent time when $K=13$ and is hence not displayed in this more difficult scenario.
\begin{figure}[]
        \centering
        \includegraphics[scale=0.28]{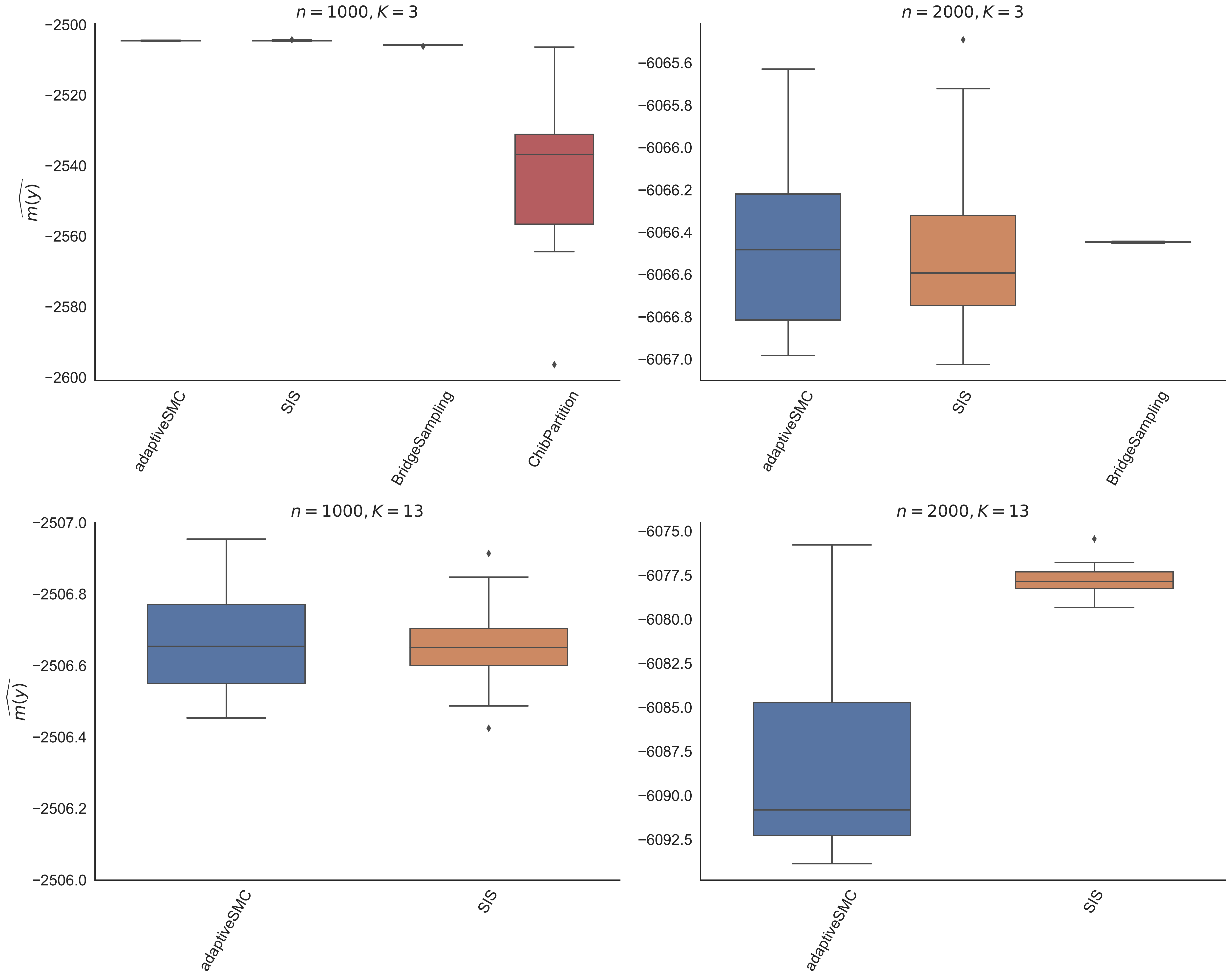}
        \caption{Boxplots for the synthetic data experiment. Tuning parameters are given in Table \ref{tab:hyperParamSynthFM} in appendix}
        \label{fig:boxplotSynthData}

    \end{figure}

\section{Evidence approximation for Dirichlet Process Mixtures}
\label{sec:DPM}
The Dirichlet Process Mixture model arises naturally as the limit when $K\rightarrow\infty$ of the finite mixture model with a Dirichlet prior on the weights $\mathcal D(M/K,\dots,M/K)$ for some $M>0$. The limiting distribution is called the Dirichlet Process Mixture model (DPM) and is characterized by

\begin{align}
y_i|z_i,\boldsymbol \theta &\overset{i.i.d}{\sim} f(y_i|\theta_{z_i}),\;i=1,\dots,n\\
\nonumber p(z_i=k)&=\varpi_k,\;k=1,2,\dots\\
\nonumber\varpi_1,\varpi_2,\dots&\sim GEM(M)\\
\nonumber M&\sim\pi(M)\\
\nonumber\theta_1,\theta_2,\dots&\overset{i.i.d}{\sim} G_0
\end{align}
where $GEM$ denotes the stick-breaking respresentation for the distribution of the weights $(\varpi_k)_k$ as described by \cite{sethuraman1994constructive} and given by
$$\varpi_k=v_k\prod_{i=1}^{k-1}(1-v_i)$$
for $\{v_i\}_i$ an infinite sequence following the $Beta(1,M)$ distribution ensuring that $\sum_{i=1}^\infty \varpi_k=1$ a.s.

Using the above notation, if $P$ is a realisation of the Dirichlet Process with concentration parameter $M$ and base measure $G_0$, denoted $P\sim DP(M,G_0)$, then 
$$P=\sum_{i=1}^\infty \varpi_i\delta_{\theta_i}$$
where $\varpi_1,\varpi_2,\dots$ follow the stick-breaking distribution $GEM$ and $\theta_1,\theta_2,\dots$ are $i.i.d$ realisations from $G_0$.
Hence, the density of a DPM can be written as the infinite mixture
\begin{equation}
    \label{densityDPM}
    f_P(\boldsymbol y)=p(\boldsymbol y|P)=\prod_{i=1}^n\int f(y_i|\theta)dP(\theta)=\prod_{i=1}^n\sum_{i=1}^\infty \varpi_i f(y_i|\theta_i)
\end{equation}

One can derive a useful representation of the DPM model in terms of cluster allocations in the case where the mixture kernel $f$ and the base measure of the Dirichlet Process $G_0$ are conjugate.
Using the Pólya Urn scheme (\cite{blackwell1973ferguson}) one can write
\begin{equation}
\label{spriorcond}
    p(z_i=k|z_1,\dots,z_{k-1},M)=\frac{M}{M+i-1}\delta_{k=\max\limits_{1\leq j\leq i-1}\{z_j\}+1}+\frac{1}{M+i-1}\sum_{j=1}^{i-1}\delta_{k=z_j}
\end{equation}
and thanks to the exchangeability of the sequence $\mathbf{z}=(z_1,\dots,z_n)$ one can derive the induced joint prior distribution on the cluster allocations
\begin{equation}
    \pi(\mathbf{z}|M)=\frac{\Gamma(M)}{\Gamma(M+n)}M^{K_+}\prod_{j=1}^{K_+}\Gamma(n_j)
\end{equation}
where $K_+$ is the number of distinct values among $(z_1,\dots,z_n)$ and for all $j=1,\dots,K_+$, $n_j$ is the number of $z_i$'s equal to $j$.
This representation in terms of cluster allocations also allows for a convenient form of $p(y|z_1,\dots,z_n,M)$ as
\begin{equation}
\label{eq_likelihood}
    p(\mathbf{y}|\mathbf{z},M)=\prod_{j=1}^{K_+}\int\prod_{i:z_i=j}f(y_i|\theta_j)dG_0(\theta_j)
\end{equation}
Obviously this expression of the likelihood function is only useful when an analytic form for (\ref{eq_likelihood}) exists, which is the case when $G_0$ and $f$ are conjugate, which we shall assume hereafter.
Then one can derive the posterior distribution of $(z_1,\dots,z_n,M)$ up to a constant as
\begin{equation}
\label{eq:unnormalizedposterior}
    \pi(\mathbf{z},M|\mathbf{y})\propto p(\mathbf{y}|\mathbf{z},M)\pi(\mathbf{z}|M)\pi(M)
\end{equation}
\cite{neal2000markov} gives an extensive review of Gibbs sampling strategies targeting the posterior of a Dirichlet Process mixture $\eqref{eq:unnormalizedposterior}$.

While literature is abundant on how to derive efficient estimators of the marginal likelihood of a parametric model, as we discussed in the previous section, equivalent methods for non-parametric models such as the DPM model are far from numerous. To our knowledge, only \cite{basu2003marginal} address the issue of evidence computation for the DPM in the conjugate case by extending the well-established method of \cite{chib1995marginal}. It is to be noted that it is difficult to find examples of application in literature of this non-parametric version of Chib's algorithm. There exists an SMC framework summarized by \cite{griffin2011sequential} that is applicable if $M$ is assumed to be known. This is not our approach in this article since, as illustrated on Figure 3 of \cite{tokdar2021bayesian}, this parameter can have a decisive influence on the marginal likelihood and thus on the Bayes Factor. 

In this section we aim at assessing the method proposed in \cite{basu2003marginal} and deriving alternative ways to compute the marginal likelihood of a DPM. 

\subsection{Existing algorithm}
\paragraph{Chib's algorithm.}
\cite{basu2003marginal} adapt Chib's algorithm from \cite{chib1995marginal} to the conjugate Dirichlet Process Mixture model using Bayes identity
\begin{equation}
    \label{ChibDPM}
    \hat m_{DP}^{Chib}(\boldsymbol y)=\frac{ L(\boldsymbol y|M^*,G_0)\pi(M^*)}{\hat \pi(M^*|\boldsymbol y)}
\end{equation}
where $M^*$ is some point in $(0,\infty)$ and $L(\boldsymbol y|M^*,G_0)$ is the integrated likelihood with respect to the Dirichlet Process $DP(M,G_0)$,
\begin{equation}
    \label{integrated_llk_DPM}
    L(\boldsymbol y|M,G_0)=\int f_P(\boldsymbol y)dDP(P|M,G_0)
\end{equation}
 Just like for its finite counterpart, Chib's algorithm for the DPM uses a Rao-blackwellised estimator of the posterior density with the introduction of a latent variable $\eta$ in the Gibbs sampler, as suggested by \cite{escobar1995bayesian},
\begin{equation}
\label{eqMstar}
    \hat\pi(M^*|\boldsymbol y)=\sum_{t=1}^T\pi(M^*|\boldsymbol y, \eta^{(t)},K_+^{(t)})
\end{equation}
where $K_+$ is the number of non-empty clusters implied by the allocation vector $z$.
The full conditional distribution on $M$ is available in closed form provided $M$ follows a Gamma distribution $\Gamma(a,b)$ a priori and is given by the mixture
\begin{equation*}
    M|\boldsymbol y,\eta,K_+\sim\omega\Gamma(a+K_+,b-\log(\eta))+(1-\omega)\Gamma(a+K_+-1,b-\log(\eta))
\end{equation*}
where $\omega=(a+K_+-1)/\{n(b-log(\eta))+a+K_+-1\}$ and $\eta|\boldsymbol y,M\sim Beta(M+1,n)$.

Unlike the finite mixture model, the likelihood ordinate (\ref{integrated_llk_DPM}) is defined by an intractable integral which must be estimated. Following \cite{kong1994sequential}, the authors propose a Sequential Importance Sampling (SIS) scheme using the following importance distribution where for any vector $\boldsymbol x\in \mathbb R^n$ we define $\boldsymbol x_{1:i}:=(x_1,\dots,x_i)$, for all $i=1,\dots,n$. 

\begin{equation}
    \label{newton}
    \pi^*(z_1,\dots,z_n|\boldsymbol y,M)=\prod_{i=1}^n\pi(z_i|\boldsymbol y_{1:i},\boldsymbol z_{1:i-1},M)
\end{equation}
which is available in closed form in the conjugate case. It can be easily derived that the final importance weight can be expressed as 
\begin{equation*}
    w_n(z_1,\dots,z_n)=\frac{\pi(z_1,\dots,z_n|\boldsymbol y,M)}{\pi^*(z_1,\dots,z_n|\boldsymbol y,M)}=\frac{p(y_1|z_1,G_0)\prod_{i=2}^n p(y_i|\boldsymbol y_{1:i-1}\boldsymbol z_{1:i-1},G_0)}{L(\boldsymbol y|M^*,G_0)}
\end{equation*}
Hence, given a sequence $\{\boldsymbol z^{(t)}\}_{t=1}^T$ generated sequentially from (\ref{newton}), one can derive as a by-product of SIS 
\begin{equation}
    \hat L(\boldsymbol y|M^*,G_0)=\frac{1}{T}\sum_{t=1}^T p(y_1|z_1^{(t)},G_0)\prod_{i=2}^n p(y_i|\boldsymbol y_{1:i-1}\boldsymbol z^{(t)}_{1:i-1},G_0)
\end{equation}
Algorithm \ref{algoChibDPM} gives the details of implementation of this variant of Chib's algorithm.
\begin{algorithm}
\setstretch{1.35}
\caption{\textit{Chib's algorithm for the DPM}}\label{algoChibDPM}
\textbf{Input} : $(\boldsymbol z^{(t)})_{t=1}^{T_1},( K_+^{(t)})_{t=1}^{T_1},( \eta^{(t)})_{t=1}^{T_1}$ from a Markov Chain at stationarity targeting $\pi_K(\boldsymbol z,M|\boldsymbol y)$ \\
\For {$t=1,\dots,T_1$}{
    $\text{Compute } \hat\pi(M^*|\boldsymbol y)= \sum_{t=1}^{T_1}\pi(M^*|\boldsymbol y,\eta^{(t)},K_+^{(t)})$ using (\ref{eqMstar})
}

\For {$q=1,\dots,T_2$}{
    Set $\tilde z_1^{(q)}$=1, $K_+=1$ and compute $\gamma_1=p(y_1|\tilde z_1,G_0)=\int f(y_1|\theta)dG_0(\theta)$\\
    \For{$i=1,\dots,n$}{
    Compute \begin{align*}
        \gamma_i^{(q)}&=p(y_i|\boldsymbol y_{1:i-1},\boldsymbol{\tilde z^{(q)}_{1:i-1}},M^*)\\
        &=\frac{M^*}{M^*+i-1}\int f(y_i|\theta)dG_0(\theta)+\sum_{k=1}^{K_+}\frac{n_k(\boldsymbol{\tilde z}^{(q)}_{1:i-1})}{M^*+i-1}\frac{m(C_k(\boldsymbol{\tilde z}^{(q)}_{1:i-1})\cup\{y_i\})}{m(C_k(\boldsymbol{\tilde z}^{(q)}_{1:i-1}))}
    \end{align*}
    Draw $\boldsymbol{\tilde{ z}}^{(q)}_i$ from the categorical distribution
    \begin{align*}
        p(\boldsymbol{\tilde{z}}=k|\boldsymbol y_{1:i},\boldsymbol{\tilde{z}}^{(q)}_{1:i-1}, M^*)\propto\begin{cases}
        \frac{n_k(\boldsymbol{\tilde{z}}^{(q)})}{M^*+i-1}\frac{m(C_k(\boldsymbol{\tilde{z}}_{1:i-1}^{(q)})\cup\{y_i\})}{m(C_k(\boldsymbol{\tilde{z}}_{1:i-1}^{(q)}))} \text{ if } k=1,\dots,K_+\\
        \frac{M^*}{M^*+i-1}\int f(y_i|\theta)dG_0(\theta)\text{ if } k=K_++1
        \end{cases}
    \end{align*}
    }
    Compute $w^{(q)}=\prod_{i=1}^n\gamma_i^{(q)}$
}
Compute $\hat L(\boldsymbol y|M^*,G_0)=1/T_2\sum_{q=1}^{T_2}w^{(q)}$

\textbf{Output} : $m_{DP}^{Chib}(\boldsymbol y)=\hat L(\boldsymbol y|M^*,G_0)\pi(M^*)/\hat\pi(M^*|\boldsymbol y)$
\end{algorithm}

\subsection{Proposed algorithm}
The adapted version of Chib's algorithm proposed by \cite{basu2003marginal} is very efficient provided a sensible choice is made for the value of $M^*$ in equation \eqref{eqMstar}. However, it is difficult to identify a candidate value with high posterior probability since $p(\boldsymbol y|M)$ is difficult to compute for all $M$. We here propose another approach related to Bridge Sampling that we believe to be more stable.
\paragraph{Reverse Logistic Regression.} Introduced by \cite{geyer1994estimating}, Reverse Logistic Regression (RLR) is a biased IS-related method. It requires $T_1$ and $T_2$ $i.i.d$ observations from two distributions $\pi_1=\tilde{\pi}_1/c_1$, the adversarial distribution for which the normalizing constant $c_1$ is known, and $\pi_2=\tilde{\pi}_2/c_2$, the distribution of interest, only known up to a normalizing constant $c_2$. The main idea of Reverse Logistic Regression is to ignore from which distribution each observation stems (i.e ‘forgetting' the labels). By doing so, observations can be assumed to arise from a mixture model which mixture weights are proportional to the normalizing constants $c_1$ and $c_2$. The idea of \cite{geyer1994estimating} is then to perform some ‘reverse' logistic regression, in the sense that the response variable is not random while the predictors are, in order to retrieve an estimator for $c_2$.

We here apply this idea to the DPM by considering as an importance distribution $\pi_1(\boldsymbol z,M):=\pi^*(\boldsymbol z|\boldsymbol y,M)\pi(M)$, where $\pi^*$ is defined as in (\ref{newton}), and $\pi_2(\boldsymbol z,M)=\pi_2(\boldsymbol z,M)/c_2:=\tilde \pi(z,M|\boldsymbol y)/m(\boldsymbol y)$ is the posterior distribution. Note that simpler choices for $\pi_1$ exists, in particular for the non-conjugate case, such as the prior distribution $\pi(\boldsymbol z,M)$ for instance.\\
Assume that $\{\boldsymbol z^{(1,j)},M^{(1,j)}\}_{j=1}^{T_1}$ and $\{\boldsymbol z^{(2,j)},M^{(2,j)}\}_{j=1}^{T_2}$ are respectively i.i.d samples from $\pi_1$ and $\pi_2$.
Then the marginal likelihood $c_2:=\hat m(\boldsymbol y)$ can be estimated by finding the maximum of the log quasi-likelihood
\begin{align}
\label{eq:objectiveGeyer}
    \ell(c_2)&=\sum_{i=1}^{T_1} \log\gamma_1(\boldsymbol z^{(1,i)},,M^{(1,i)})+
    \sum_{j=1}^{T_2} \log\gamma_2(\boldsymbol z^{(2,j)},,M^{(2,j)})
\end{align}
where
    $$\gamma_1(\boldsymbol z,M)=\frac{\frac{T_1}{T_1+T_2}\pi_1(\boldsymbol z,M)}{\frac{T_1}{T_1+T_2}\pi_1(\boldsymbol z,M)+\frac{T_2}{T_1+T_2}\tilde \pi_2(\boldsymbol z,M)/c_2}$$
and
    $$\gamma_2(\boldsymbol z,M)=1-\gamma_1(\boldsymbol z,M)$$
 \cite{chen1997monte} show that Reverse Logistic Regression is essentially equivalent to optimal Bridge sampling. Therefore, standard results about Bridge Sampling as given in \cite{fruhwirth2004estimating} can be applied to derive the standard error of the simulated marginal likelihood $\hat m_{DP}^{RLR}(\boldsymbol y)$.
 
   \begin{algorithm}[!ht]
   \setstretch{1.7}
\caption{\textit{RLR algorithm for the DPM - SIS instrumental distribution}}\label{}
\textbf{Input} : $\{\boldsymbol z^{(1,t)}\}_{t=1}^{T_1},\{M^{(1,t)}\}_{t=1}^{T_1}\sim\pi^*(\boldsymbol z|\boldsymbol y,M)\pi(M)$  and $\{\boldsymbol z^{(2,t)}\}_{t=1}^{T_2},\{M^{(2,t)}\}_{t=1}^{T_2}$ from a Markov Chain at stationarity targeting $\pi_K(\boldsymbol z,M|\boldsymbol y)$ \\
\For {$t=1,\dots,T_1$}{
    $\text{Compute } \pi_1(\boldsymbol z^{(1,t)},M^{(1,t)})=\pi^*(\boldsymbol z^{(1,t)}|M^{(1,t)},\boldsymbol y)\pi(M^{(1,t)})$ (\textit{this step can be avoided if done alongside the simulation of} $\{\boldsymbol z^{(1,t)}\}_{t=1}^{T_1},\{M^{(1,t)}\}_{t=1}^{T_1}$)\\
    Compute $\tilde\pi_2(\boldsymbol z^{(1,t)},M^{(1,t)})=p(\boldsymbol y|\boldsymbol z^{(1,t)},M^{(1,t)})\pi(\boldsymbol z^{(1,t)}|M^{(1,t)})\pi(M^{(1,t)})$ as in (\ref{eq:unnormalizedposterior})
}

\For {$t=1,\dots,T_2$}{
    $\text{Compute } \pi_1(\boldsymbol z^{(2,t)},M^{(2,t)})=\pi^*(\boldsymbol z^{(2,t)}|M^{(2,t)},\boldsymbol y)\pi(M^{(2,t)})$\\
    Compute $\tilde\pi_2(\boldsymbol z^{(2,t)},M^{(2,t)})=p(\boldsymbol y|\boldsymbol z^{(2,t)},M^{(2,t)})\pi(\boldsymbol z^{(2,t)}|M^{(2,t)})\pi(M^{(2,t)})$ as in (\ref{eq:unnormalizedposterior})
}

Optimize function $\ell(c_2)$ defined in (\ref{eq:objectiveGeyer})\\
\textbf{Output} : $\hat m_{DP}^{RLR}(\boldsymbol y)=\arg\max\ell(c_2)$
\end{algorithm}

\subsection{Simulation study}

\subsubsection{Experiment 3 : \texttt{Galaxy} data.}
 In this section, we evaluate the relative performances of \cite{basu2003marginal}'s algorithm with other competing alternatives which, to our knowledge, has never been done. To do so, we once again start by using the \texttt{galaxy} data and consider a Dirichlet Process mixture of normal distributions with unknown location and scale parameters and use the conditionally conjugate Normal-Inverse Gamma prior $G_0$ as defined in (\ref{condConjPrior}).\\
 Note that, while,  for finite mixtures, model complexity is a function of the number of mixture components $K$, most difficulties arise in the DPM for growing numbers of observations $n$. This is partly due to the fact that 
 $K_+=\text{O}(M\log(n))$.
 Hence we design three different experiments with respectively 6, 36 and 82 observations from the $\texttt{Galaxy}$ data as shown on Figure \ref{fig:galaxyDPM}. One can note in particular that simple estimators such as the Arithmetic Mean and Harmonic Mean estimators, which show rather good results for a small data size, fail to converge as soon as the amount of data becomes moderately large. Since we are not able to make the Arithmetic Mean estimator converge (i.e to make its variance low) as in the finite mixture experiments of Section \ref{secsimulationsFM}, we use Reverse Logistic Regression where the prior is used as the adversarial distribution (hereafter RLR-Prior) as a reference value. This method is indeed independent of both \cite{basu2003marginal}'s algorithm and RLR with a SIS adversarial distribution since it does not depend on the sequential imputation scheme \eqref{newton}. It is interesting to note that RLR-Prior yields rather satisfactory results given the simplicity of the adversarial distribution, compared with the Arithmetic Mean estimator that also relies on prior samples and suffers from pathological variance. For non-conjugate Dirichlet Process Mixture models, this rather inexpensive estimator could be a good, easy-to-implement first approach to the marginal likelihood estimation problem.

 \begin{figure}
     
        \centering
        \includegraphics[scale=0.24]{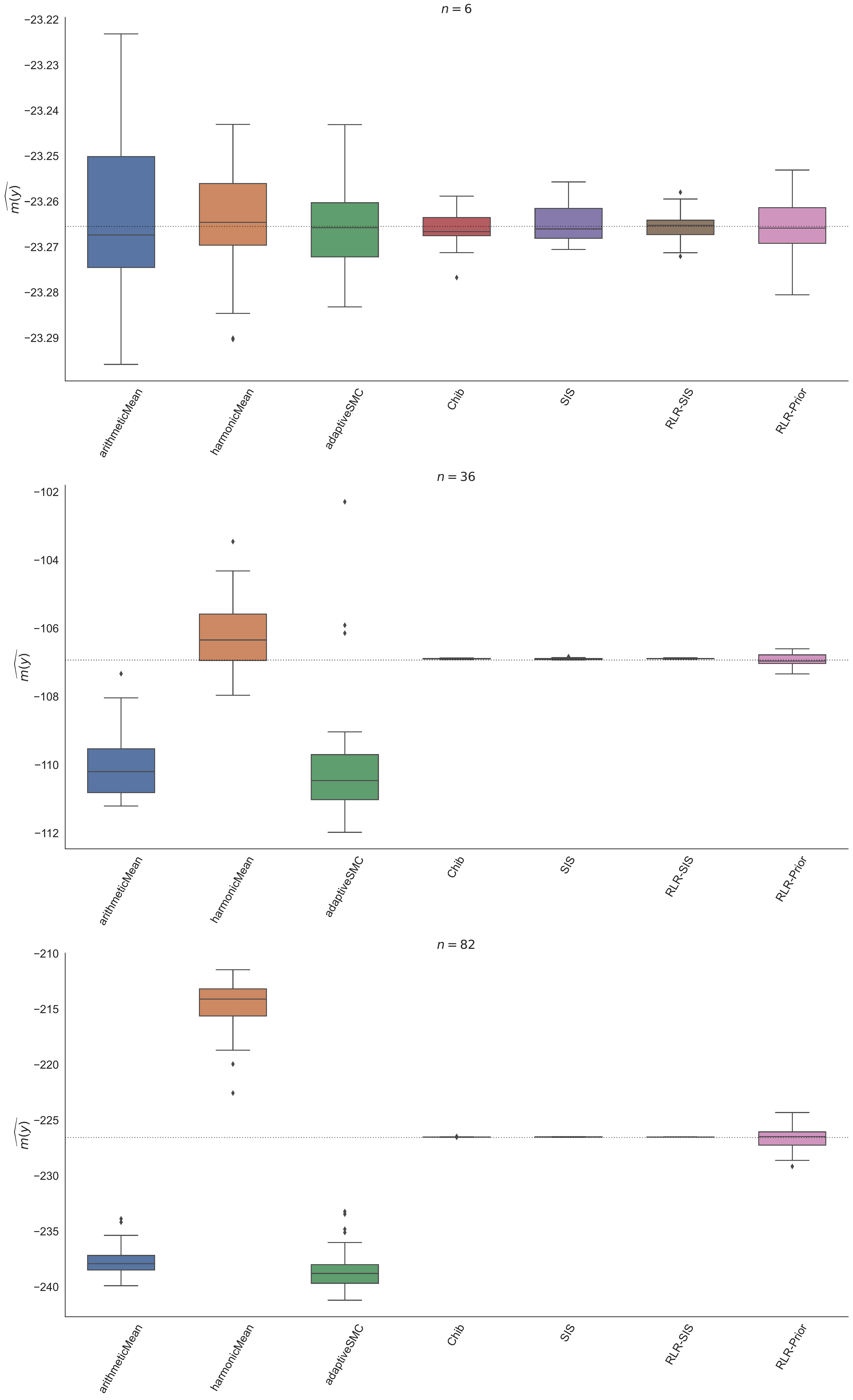}
        \caption{\texttt{Galaxy} data. Boxplots of the different methods (20 repetitions each) for different values of $n$. The dashed line corresponds to the mean of the RLR-prior estimator. The choice of tuning parameters is given in Table \ref{tab:hyperParamDPM} in Appendix.}
           \label{fig:galaxyDPM}

 \end{figure}

Figure \ref{fig:MSEvsTimeDPM} shows the decrease of the MSE as a function of time and shows that for a given allocated time, the error produced by Chib's estimator is greater than that of RLR-SIS by about a factor $\exp(0.3)\approx 1.35$.
\begin{figure}[]
        \centering
        \includegraphics[scale=0.46]{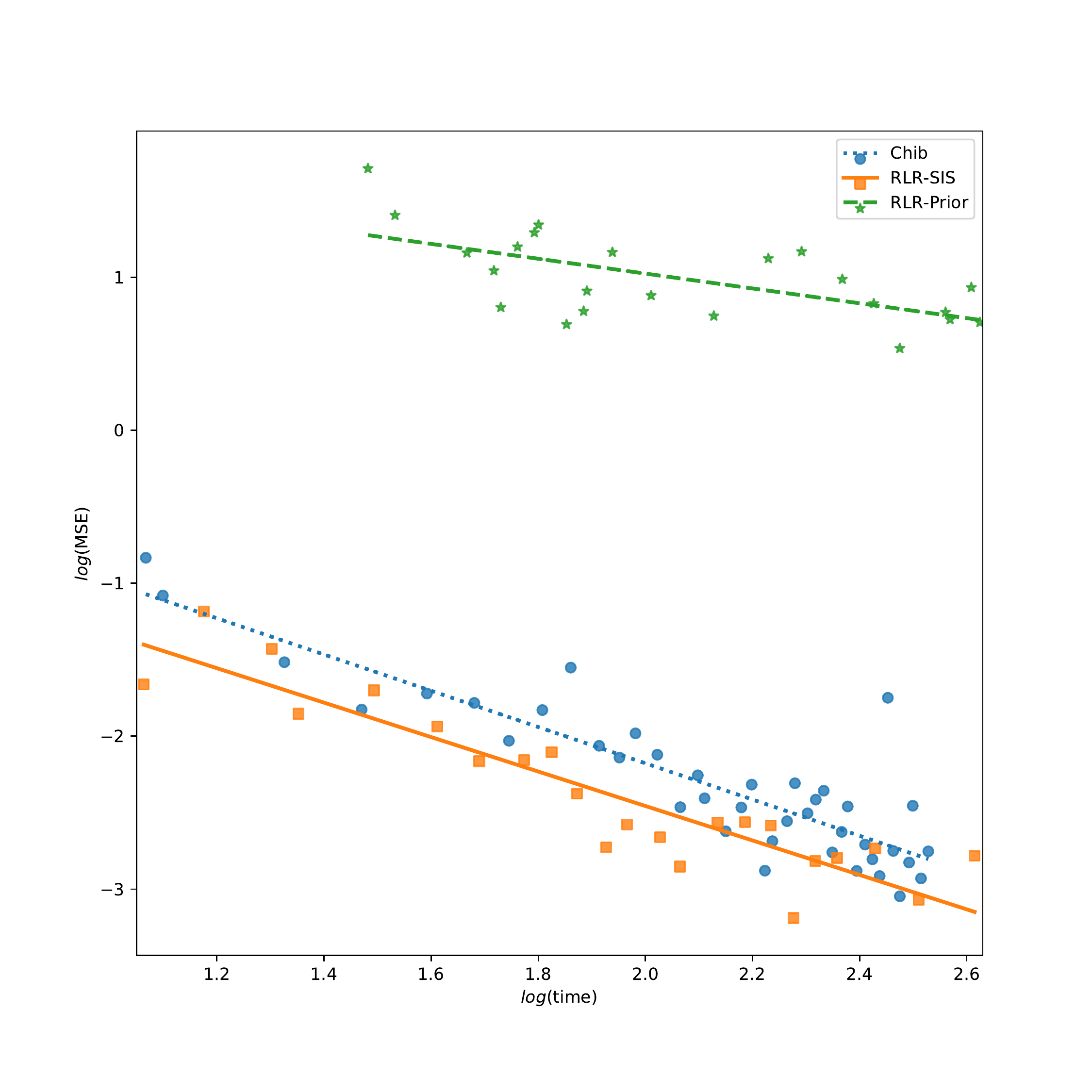}
        \caption{\texttt{Galaxy} data. MSE vs time with regression lines.}
        \label{fig:MSEvsTimeDPM}

    \end{figure}

\subsubsection{Experiment 4 : Synthetic data, $n=1000$.}
We now assess the scalability of both methods on a synthetic data set of 1000 observations, arising from a 6-component mixture of normal distributions. Although we cannot compute analytically the true value of the marginal likelihood, we do know that both Chib's algorithm and RLR-SIS are consistent, as illustrated in the previous section. That is, provided we do not observe a large simulation-related variance, the estimator can be trusted. The tuning parameters used for the experiment, as well as the run time are given in the caption of Figure \ref{fig:n1000DPM}. The run time corresponds to the total time required to compute 20 repetitions of each estimator on 10 cores, each core being an Intel(R) Xeon(R) CPU E5-2630 v4 @ 2.20GHz. Note that no parallelization of the embarrassingly parallel SIS step was done (which would decrease the total run time of both algorithms dramatically, but by the same factor).

On Figure \ref{fig:n1000DPM}, it is clear that Chib's estimator suffers from a large variance, which in turn translates into a downward bias on the $log$-scale. On the other hand, RLR-SIS exhibits much more stability. Note that although more iterations and computational time are allocated to Chib's estimator, this does not seem to be enough to correct Chib's pathological variance while RLR-SIS yields a more accurate estimate of the marginal likelihood within a much lower run time.

\begin{figure}
    \centering
    \includegraphics[scale=0.5]{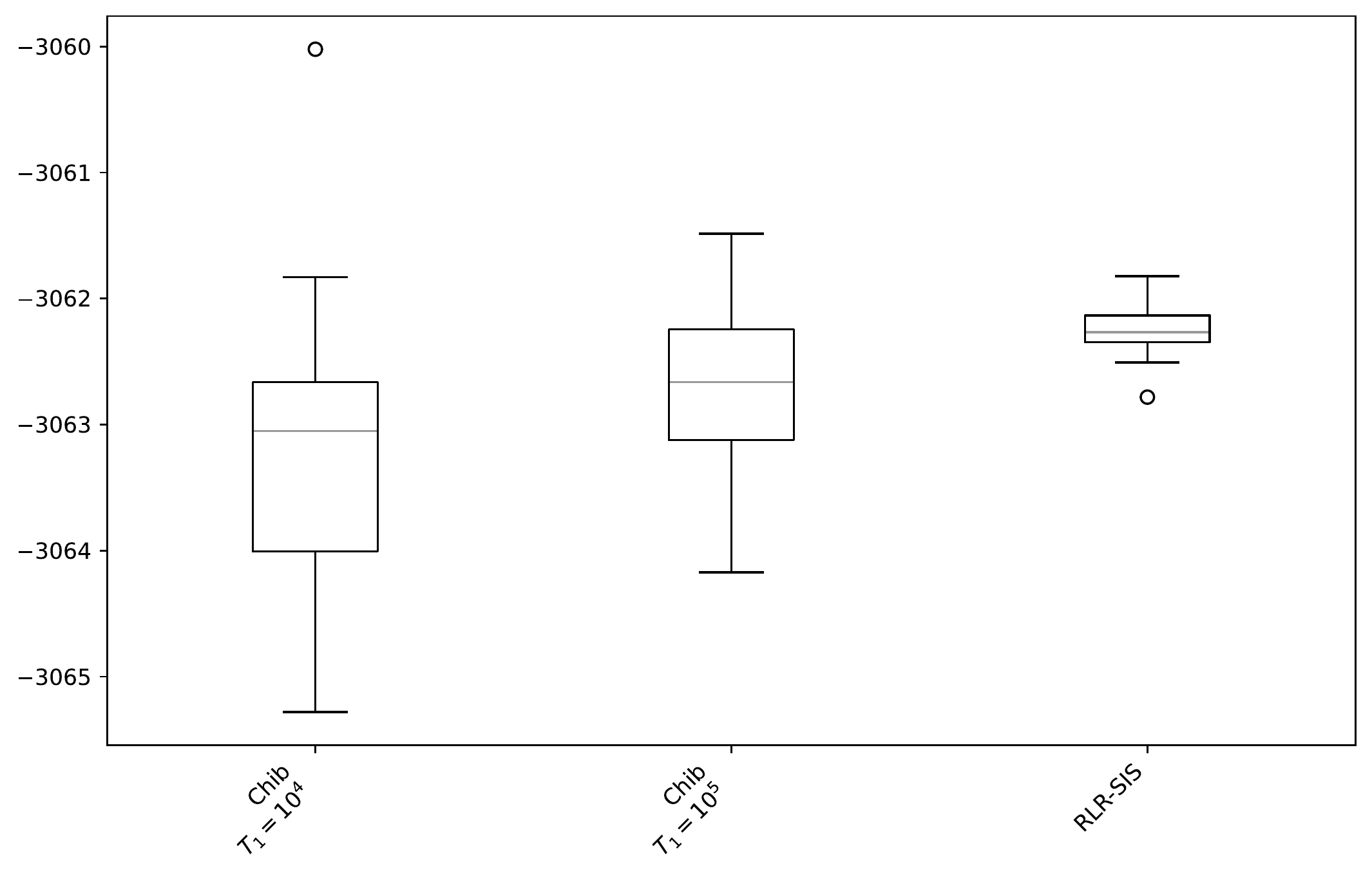}
    \caption{Boxplots of the marginal likelihoods estimates, 20 repetitions each. Dirichlet process mixture model, synthetic data, $n=1000$. Chib left hand side: Chib : $T_1=10^4$, $burnIn=10^3$, $T_2=600$, Run time : 04:09:51. Chib right hand side $T_1=10^5$, $burnIn=10^4$, $T_2=2000$, Run time : 34:01:18. RLR-SIS : $T_1=10^4$, $burnIn=10^3$, $T_2=600$, Run time : 06:23:12.}
    \label{fig:n1000DPM}
\end{figure}

\subsubsection{Experiment 5 : Testing a finite mixture against a DPM.} 
In this experiment, we test whether the Bayes factor converges to infinity under the null hypothesis that data arises from a finite mixture with $K_0$ components, when fitting a DPM. That is, we want to check whether $$ BF:=\frac{m_{K_0}(\boldsymbol y)}{m_{DPM}(\boldsymbol y)}\underset{n\rightarrow\infty}{\longrightarrow}\infty$$ To do so, 100 data sets $\boldsymbol y$ are generated from $P_{K_0}$, where $P_{K_0}$ is a finite mixture of $K_0$ normal components. The Bayes Factor is then computed for successive values of $n$ and the resulting ‘Bayes Factor paths' are displayed on Figure \ref{fig:BFDPMFM}. On the left hand side, $K_0=1$ and the datasets simply arise from a normal distribution with mean $\mu=0$ and scale $\sigma=2$. On the right hand side, $K_0=3$ with mean parameters $\boldsymbol \mu=(-3,4,12)$, scales $\boldsymbol \sigma=(2,2,2)$ and weights $\boldsymbol \varpi=(0.3,0.2,0.5)$. Empirically, we can see that the Bayes Factor appears to converge to infinity, in both scenarios as it seams that $P_{K_0}(BF>1)\longrightarrow 1$. 

In the next section we formalize this empirical study by proving the consistence of the Bayes Factor for a particular class of ‘strongly identifiable' Dirichlet Process Mixture models under a finite mixture null hypothesis.
  \begin{figure}[]
        \centering
        \includegraphics[width=\textwidth]{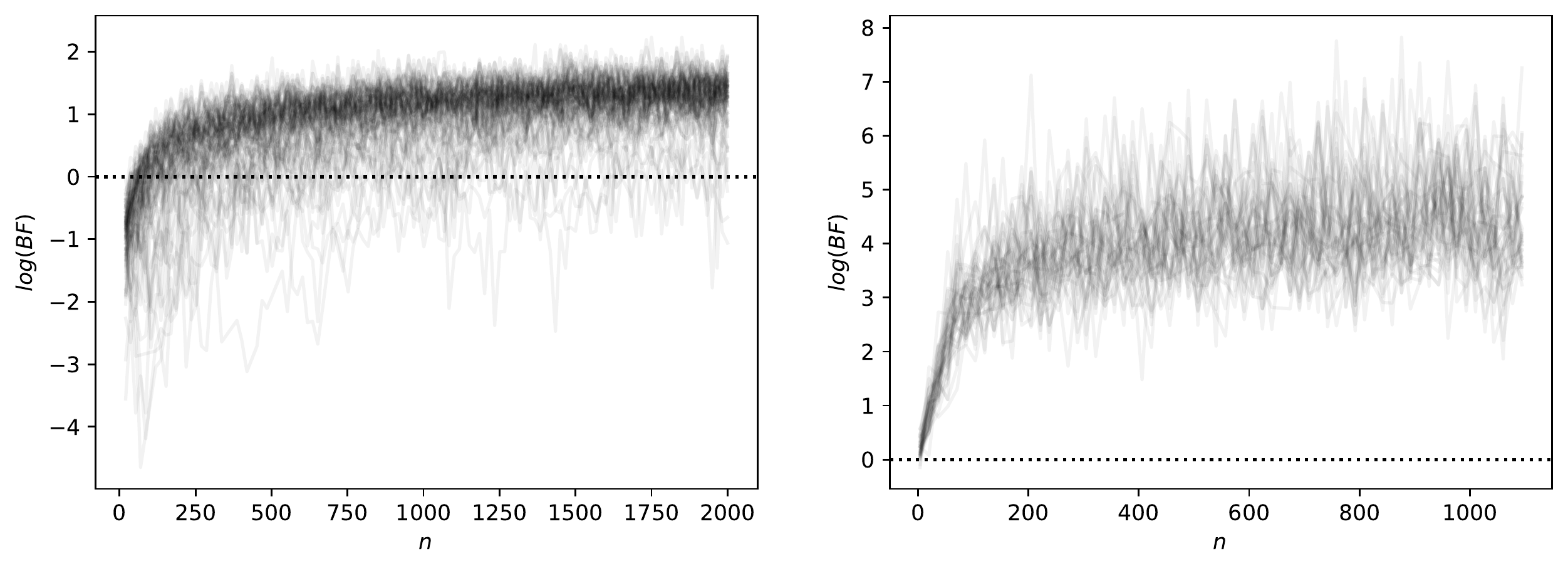}
        \caption{$log$-Bayes Factor paths for 100 datasets as $n$ increases. Left-hand side : normal distribution. Right-hand side : 3-component mixture of normal distributions. The dashed line is the line $\log(BF)=0$.}
        \label{fig:BFDPMFM}

    \end{figure}
\section{Asymptotic behaviour of the marginal likelihood associated to the Dirichlet process mixture} \label{sec:theory}

In this section we study the asymptotic behaviour of the marginal likelihood 
$m_{DP}(\mathbf y) = \int f_{P}(\mathbf y) d\pi(P)$ when $\pi $ is a $DP(M, G_0)$ and $\mathbf y = (y_1, \cdots, y_n) $ with $y_i \stackrel{iid}{\sim} f_0$, a probability density on $ \mathbb R^{d}$.

Understanding the behaviour of $m_{DP}(\mathbf y)$ corresponds to determining asymptotic lower and upper bounds for $m_{DP}(\mathbf y)$. Determining lower bounds on marginal densities is typically done using the technique of \cite{ghosal2000convergence}, Lemma 8.1 for instance,  where it is  is used to derive posterior concentration rates under a Dirichlet Process mixture model.  
There is now a large literature on  posterior contraction rates in Dirichlet process mixture models, see for instance \cite{ghosal2007posterior}, \cite{kruijer2010adaptive}, \cite{shen2013adaptive}, and \cite{scricciolo2014adaptive} in which a lower bound on $m_{DP}(\mathbf y)$ is derived. 

The difficult part when assessing evidence in this setting stands in obtaining an upper bound on $m_{DP}(\mathbf y)$, since it requires a refined understanding on neighbourhoods of $f_0$. In the following  section we concentrate on  deriving such an upper bound when $f_0 \in \cup_{K\in \mathbb N^\star} \model_K$, where $\model_K$ denotes the  model of mixtures with  $K$  densities $f_\theta(y)$. Indeed an important application of such an upper bound is in the goodness of fit test (or test for the number of components)
$$ H_0 :\,  f_0 \in \model_K, \quad \text{versus} \quad H_1: \,f_0 \notin  \model_K,$$
to prove that the Bayes factor is consistent under the null:
$$ \frac{ m_K(\mathbf y) }{ m_{DP}(\mathbf y) } \rightarrow \infty , \quad \text{in probability under} f_0 $$

  Obtaining such an upper bound is of interest even outside the context of testing, since it is a way to understand the behaviour of credible regions in infinite dimensional models (see, e.g., \cite{rousseau2020asymptotic}), together with proving a lower bound on posterior contraction rates as in \cite{castillo2008lower}. 

\subsection{Upper bound on $m_{DP}(\mathbf y)$} \label{sec:theorie:UB}

In this section we assume that $P_0 = \sum_{j=1}^{k_0}\varpi_j^0\delta_{\theta_j^0}$ where $\varpi_j^0>0$ for all $j$ and $\theta_j^0\neq \theta_i^0$ for all $i\neq j$.

We also consider the following regularity assumptions on the distribution $f(y|\theta)$. 

\noindent
\textbf{Assumption A1} [Regularity] For all $y\in \mathcal Y\subset \mathbb R^{d}$, the function $\theta \rightarrow f(y|\theta) $ is twice continuously differentiable and there exist $H_1\in L^2(\mathbb R^d),\,H_2 \in L^1(\mathbb R^d),\, H_3 \in L^1(\mathbb R^d)$, and $\delta_0,\delta>0$ such that for all $j=1, \cdots, k_0$, 
\begin{equation}
\begin{split}
 \sup_{\theta \in \Theta}  \| \nabla f(y| \theta) \| &\leq H_1(y) , \quad 
  \sup_{\|\theta - \theta_j^0\|\leq \delta_0} \| D^2 f(y| \theta) \|\leq H_2(y)\\
  |D^2 f(y|\theta) - D^2f(y|\theta_j^0)| &\leq H_3(y) \|\theta - \theta_j^0\|^\delta, \quad \forall \|\theta - \theta_0\|\leq \delta_0,
  \end{split}
  \end{equation}

We then consider two types of mixture models: location mixtures and strongly identifiable mixtures. The latter are defined by the following assumption: Denote by $S_d^+$ the set of symmetrical semi-definite matrices of dimension $d$, \\
\textbf{Assumption A2} [Strong identifiability]\\
For all  $\epsilon>0$ and all 
$\nu_0$  non null measure on $A_0 = [\cup_{j=1}^{k_0} B(\theta_j^0, \epsilon)]^c$ satisfying $\nu_0(A_0)\leq 1$ and all $\alpha_0\geq 0$, $\alpha_j \in \mathbb R$, $\beta_j \in \mathbb R^d$ and $\gamma_j \in S_d^+$, $j=1, \cdots, k_0$, 
$$ \alpha_0 f_{\nu_0}(y) + \sum_{j=1}^{k_0} [\alpha_jf_{\theta_j^0} + \beta_j \nabla f_{\theta_j^0} + \gamma_j^t D^2 f_{\theta_j^0} \gamma_j ] = 0 \quad \Leftrightarrow \alpha_0= \alpha_j = 0, \quad \beta_j=\gamma_j=0, \, \forall j$$
\textbf{Assumption A3} 
For all $x$, $f_\theta(x)$ goes to 0 at  $\bar\Theta\cap \Theta^c$, where $\bar \Theta$ is the closure of $\Theta$ and
$$\sup_\theta \|f_\theta(\cdot)\|_\infty< \infty$$

The cornerstone of the proof of Theorem \ref{UBMDP1} is the existence and boundedness of the density function of the Dirichlet Process random mean defined as $\mu(P):=\int\theta dP(\theta)$ for $P$ distributed from $DP(M,G_0)$. \cite{feigin1989linear} show that the following assumption is a necessary and sufficient condition for the mean $\mu(P)$ to exist.

\noindent\textbf{Assumption A4}
The Dirichlet Process base measure $G_0$ is such that
$$\int_\Theta \log(1+\|\boldsymbol\theta\|_\infty)dG_0(\boldsymbol\theta)<\infty$$

\begin{remark}
A sufficient condition for Assumption A4 to be true is that distribution $G_0$ has a finite first moment, which includes Gaussian distributions for instance. Note that although the Cauchy distribution has no finite expectation, it does verify Assumption A4. 
\end{remark}
\begin{remark} Let $\varphi$ denote the characteristic function of $G_0$.
Since for all $(\varpi_1,\varpi_2,\dots)$ following the stick-breaking distribution and for all $z\in\mathbb R^d$, $$\prod_{j=1}^\infty\varphi(\varpi_j\boldsymbol z)=\mathbb E_\theta\left[\exp\left(i\boldsymbol z^T\sum_{j=1}^\infty \varpi_j\boldsymbol \theta_j\right)\right]=\mathbb E_\theta\left[\exp\left(i\boldsymbol z^T\mu(P)\right)\right]$$
then, $\prod_{j=1}^\infty\varphi(\varpi_j\boldsymbol z)$ exists almost surely if $\mu(P)$ exists almost surely. Hence, Assumption A4 implies the existence of $\prod_{j=1}^\infty\varphi(\varpi_j\boldsymbol z)$.
\end{remark}

The next Theorem shows that under Assumptions \textbf{A1}, \textbf{A2}, \textbf{A3} and \textbf{A4} and if $f_{P_0}$ is a mixture with $k_0$ components, then $m_{DP}(\boldsymbol y)$ is bounded from above by $o( n^{-(k_0-1 + dk_0+t)/2} )$ for some $t>0$:
\begin{theorem}\label{UBMDP1}
Assume that $y_1, \cdots, y_n$ are iid $f_{P_0}$ with $P_0 = \sum_{j=1}^{k_0}\varpi_{j}^0 \delta_{\theta_j^0}$ such that Assumptions \textbf{A1, A2, A3, A4} are satisfied. 
Consider a Dirichlet process prior on $P$: $P \sim DP(M, G_0)$ with $G_0$ satisfying $G_0(\|\theta\|>u) \leq e^{-a_0u^\tau}$ for some $a_0, \tau>0$ when $u$ is large enough. 
Then there exists $t>0$ such that for all $\epsilon>0$
\begin{equation} \label{marginalLB}
\mathbb P_{f_{0}}\left( m_{DP}(\boldsymbol y ) > n^{-(k_0-1 + dk_0+t)/2}\right) =o(1)
\end{equation}
Moreover there exists $q\geq 0$ such that 
\begin{equation}\label{L1concentration}
\Pi_{DP} \left( \| f_0 - f_p\|_1 \leq \frac{(\log n)^q }{ \sqrt{n} } | \boldsymbol y \right) = 1 + o_{P_{f_0}}(1).
\end{equation}
\end{theorem}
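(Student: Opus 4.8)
The plan is to obtain the upper bound \eqref{marginalLB} by showing that $m_{DP}(\boldsymbol y)$ cannot put much mass near the true $k_0$-component mixture $f_{P_0}$, and then convert this into the two stated conclusions. Write $m_{DP}(\boldsymbol y) = \int f_P(\boldsymbol y)\, d\Pi_{DP}(P)$ and split the integral over $P$ according to whether $\|f_P - f_0\|_1 \le \epsilon_n$ or not, for a well-chosen rate $\epsilon_n$. On the far region $\{\|f_P-f_0\|_1 > \epsilon_n\}$, standard tests (Le Cam / Birgé tests against $L^1$ balls, as in \cite{ghosal2000convergence}) together with the prior mass estimate give that the contribution is exponentially small, in particular $o(n^{-(k_0-1+dk_0+t)/2})$ with $P_{f_0}$-probability tending to one. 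So the whole problem reduces to bounding $\int_{\{\|f_P-f_0\|_1\le \epsilon_n\}} f_P(\boldsymbol y)\, d\Pi_{DP}(P)$, and this is where the work lies.

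First I would establish a \emph{structural} fact: if $P$ is such that $f_P$ is $\epsilon_n$-close to $f_0 = f_{P_0}$ in $L^1$, then, using Assumptions \textbf{A1}, \textbf{A2} (strong identifiability) and \textbf{A3}, the mixing measure $P$ must be close to $P_0$ in a Wasserstein-type sense: its mass concentrates in shrinking balls $B(\theta_j^0, \rho_n)$ around the $k_0$ support points, with the leftover mass $P(A_0)$ and the local first/second moment discrepancies all controlled by a polynomial function of $\epsilon_n$. This is the classical inversion of strong identifiability (à la Chen, Nguyen, Ho--Nguyen): a second-order Taylor expansion of $\theta\mapsto f_\theta$ at the $\theta_j^0$, with remainder controlled by $H_1,H_2,H_3$, shows that $\|f_P - f_0\|_1$ is bounded below by a nondegenerate quadratic form in the parameters $(\alpha_0,\alpha_j,\beta_j,\gamma_j)$ describing the perturbation, the nondegeneracy being exactly Assumption \textbf{A2}. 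Hence closeness of $f_P$ to $f_0$ forces the $k_0$ "effective atoms" of $P$ to be at locations $\rho_n$-close to the $\theta_j^0$ and the residual mass to be $O(\epsilon_n^c)$.

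Next I would feed this into a prior-mass computation. The key device, highlighted in the statement, is the existence and boundedness of the density of the Dirichlet process random mean $\mu(P)=\int\theta\,dP(\theta)$, guaranteed by Assumption \textbf{A4} and \cite{feigin1989linear}; more precisely I would need boundedness of the joint density of a finite collection of such linear functionals $\int g_\ell\, dP$ (moments of $P$ restricted to the neighbourhoods of the $\theta_j^0$ and on $A_0$), so that the event "$P$ has its effective atoms in $\prod_j B(\theta_j^0,\rho_n)$ with the right local moments" has $\Pi_{DP}$-probability bounded by a constant times the Lebesgue volume of the corresponding parameter region. That region has dimension $k_0-1$ (the weights on the simplex, one constraint from summing to one) plus $dk_0$ (the $k_0$ locations in $\mathbb R^d$), and radius a power of $\epsilon_n$, so its volume is $O(\epsilon_n^{k_0-1+dk_0})$ up to the extra slack $t$. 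Meanwhile, uniformly on the near region, $f_P(\boldsymbol y)\le f_0(\boldsymbol y)\, e^{O(n\epsilon_n^2)}$ by the usual likelihood-ratio bound combined with a control of the $\chi^2$ or KL divergence (again using \textbf{A1}, \textbf{A3}); and $\mathbb E_{f_0}[f_0(\boldsymbol y)] = 1$ makes the normalization trivial. Choosing $\epsilon_n$ of order $(\log n)^q/\sqrt n$ balances the $e^{O(n\epsilon_n^2)}$ factor against the polynomial volume, and yields $\mathbb E_{f_0}\big[\int_{\{\|f_P-f_0\|_1\le\epsilon_n\}} f_P(\boldsymbol y)\,d\Pi_{DP}(P)\big] \lesssim n^{-(k_0-1+dk_0+t)/2}$ for a suitable $t>0$; Markov's inequality then gives \eqref{marginalLB}. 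Finally, \eqref{L1concentration} is almost a corollary: the numerator of the posterior probability of $\{\|f_0-f_P\|_1 > (\log n)^q/\sqrt n\}$ is bounded by the exponentially-small far-region term, while the denominator $m_{DP}(\boldsymbol y)$ is bounded below by the standard \cite{ghosal2000convergence} Lemma~8.1 argument (prior mass of KL neighbourhoods of $f_0$, which is strictly positive and only polynomially small since $f_0$ is a finite mixture well inside the DPM support), so the ratio is $o_{P_{f_0}}(1)$.

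The main obstacle I anticipate is the structural step: turning an $L^1$ bound on the \emph{mixture densities} into a quantitative bound on the \emph{mixing measures} with the correct polynomial exponents, and doing so uniformly enough that it interfaces cleanly with the Dirichlet-process random-moment density bound. In particular, one must handle the possibility that $P$ splits one true atom $\theta_j^0$ into several nearby atoms (hence the need for the second-order expansion and the matrices $\gamma_j\in S_d^+$ in Assumption \textbf{A2}), and one must make sure the "effective atoms in small balls" event is expressible through finitely many linear functionals of $P$ whose joint law has a bounded density — this is exactly why Assumption \textbf{A4} and the remark on $\prod_j \varphi(\varpi_j \boldsymbol z)$ are invoked. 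The exponential testing on the complement and the lower bound on $m_{DP}(\boldsymbol y)$ are, by contrast, routine adaptations of \cite{ghosal2000convergence}.
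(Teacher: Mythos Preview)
Your overall architecture matches the paper's: split $m_{DP}(\boldsymbol y)$ over $\{\|f_P-f_0\|_1\le\delta_n\}$ and its complement, handle the complement by tests and entropy bounds \`a la \cite{ghosal2000convergence}, and on the near set combine a structural lower bound $\|f_P-f_0\|_1\ge c(f_0)N(P)$ (second-order Taylor plus Assumption~\textbf{A2}, exactly the paper's Lemma~\ref{lem:Constant}) with a prior-mass bound on $\{N(P)\le\delta_n'\}$ via the bounded density of the Dirichlet process random mean. That is indeed the skeleton of the paper's proof.

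There is, however, a genuine gap in your near-region step. You propose to control $\int_{\{\|f_P-f_0\|_1\le\epsilon_n\}} f_P(\boldsymbol y)\,d\Pi(P)$ by first asserting the \emph{pointwise} likelihood-ratio bound $f_P(\boldsymbol y)\le f_0(\boldsymbol y)\,e^{O(n\epsilon_n^2)}$. An $L^1$ bound on densities does not deliver such a pointwise bound on the $n$-fold product likelihood; and even if it did, with $\epsilon_n=(\log n)^q/\sqrt n$ the factor $e^{O(n\epsilon_n^2)}=e^{O((\log n)^{2q})}$ can be super-polynomial for the values of $q$ actually required later in the entropy argument, which would destroy the polynomial conclusion. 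The paper avoids this entirely: it works with the \emph{ratio} $e^{\ell_n(P)-\ell_n(P_0)}=f_P(\boldsymbol y)/f_0(\boldsymbol y)$, applies Fubini to get $\mathbb E_{f_0}\bigl[\int_{\text{near}}e^{\ell_n(P)-\ell_n(P_0)}\,d\Pi(P)\bigr]=\Pi(\text{near})$ exactly, and then Markov gives $\mathbb P_{f_0}(\cdots>n^{-D(k_0)-t})\le n^{D(k_0)+t}\,\Pi(\|f_P-f_0\|_1\le\delta_n)$ with no extraneous exponential factor. Your line ``$\mathbb E_{f_0}[f_0(\boldsymbol y)]=1$'' suggests you were reaching for this identity but did not land on it; replace your pointwise bound by this Fubini/Markov step and the argument closes.

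One smaller point on the prior-mass computation: you frame it as needing a bounded \emph{joint} density for a collection of linear functionals $\int g_\ell\,dP$. The paper instead uses a structural feature of the Dirichlet process: under the prior, the vector $(P(A_0),p_1,\dots,p_{k_0})$ and the normalized restrictions $P_j=P|_{B(\theta_j^0,\epsilon)}/p_j\sim DP(MG_0(\cdot\cap B(\theta_j^0,\epsilon)))$ are mutually independent, so the bound factorizes as $\delta_n^{MG_0(A_0)}\cdot\delta_n^{k_0-1}\cdot\prod_{j=1}^{k_0}\Pi_j(\|\mu(P_j)-\theta_j^0\|\le\delta_n')$, and Lemma~\ref{lem:densitymeanP} is applied to each $P_j$ separately to produce the factor $\delta_n^{d}$ per component. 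This is both simpler than a joint-density argument and makes transparent where the exponent $k_0-1+dk_0$ and the extra $t$ (coming from $MG_0(A_0)>0$) originate.
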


\begin{remark}
Exploiting the proof of Proposition 3.10 of \cite{gassiat2014local} (see Proposition 3.12) Assumption \textbf{A2} is established in location mixtures, i.e. for models in the form 
$$f_\theta(y) = f(y-\theta)$$
 and Theorem \ref{UBMDP1} is valid under Assumptions \textbf{A1, A3, A4}. 
\end{remark}

\begin{remark}
The posterior concentration in $L_1$ distance given in \eqref{L1concentration} has been obtained for Dirichlet Process mixtures when the kernel $f_\theta(x) = f(x-\theta)$ and when $f$ is analytic, see for instance \cite{ghosal2001entropies}. To the best of our knowledge it has not been extended to more general mixtures satisfying Assumption \textbf{A2}.  
\end{remark}

\subsection{Technical lemmas}

The following lemmas provide an expression for the characteristic function $\psi$ of $\mu(P)$ and subsequently establish that it is integrable. This fact is at the core of the proof of Theorem \ref{UBMDP1}.
\begin{lemma}
\label{lem_yamato}
(\cite{yamato1984characteristic})
Under Assumption A4, the characteristic function $\psi$ of $\mu(P)$ can be written as
\begin{equation}
    \psi(\boldsymbol z)=\mathbb E \left[\prod_{j=1}^\infty \varphi(\varpi_j\boldsymbol z)\right],\; \boldsymbol z\in\mathbb R^d
\end{equation}{}
\end{lemma}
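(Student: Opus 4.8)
The plan is to exploit the conditional independence built into the stick-breaking construction and to reduce the computation of $\psi$ to a conditional characteristic function that factorises over the atoms. Recall that $P=\sum_{j=1}^\infty\varpi_j\delta_{\theta_j}$, where the weight sequence $\boldsymbol\varpi=(\varpi_1,\varpi_2,\dots)$ generated by the $GEM(M)$ construction is independent of the atom sequence $(\theta_j)_j$, and, conditionally on $\boldsymbol\varpi$, the atoms are i.i.d.\ from $G_0$. Writing $\mu(P)=\int\theta\,dP(\theta)=\sum_{j=1}^\infty\varpi_j\theta_j$, I would first record that Assumption \textbf{A4}, through the necessary-and-sufficient condition of \cite{feigin1989linear} recalled before its statement, guarantees that this series converges almost surely; hence $\mu(P)$ is a well-defined finite random vector and $\psi(\boldsymbol z)=\mathbb E[\exp(i\boldsymbol z^T\mu(P))]$ is meaningful.

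Next I would introduce the partial sums $S_N=\sum_{j=1}^N\varpi_j\theta_j$ and condition on the weights. Since the atoms are conditionally independent given $\boldsymbol\varpi$ and $\varphi$ is the characteristic function of $G_0$, the conditional characteristic function of $S_N$ factorises,
\begin{equation*}
\mathbb E\!\left[\,e^{i\boldsymbol z^T S_N}\,\middle|\,\boldsymbol\varpi\,\right]=\prod_{j=1}^N\mathbb E\!\left[\,e^{i(\varpi_j\boldsymbol z)^T\theta_j}\,\middle|\,\boldsymbol\varpi\,\right]=\prod_{j=1}^N\varphi(\varpi_j\boldsymbol z).
\end{equation*}
Letting $N\to\infty$ with $\boldsymbol\varpi$ held fixed, the almost-sure convergence $S_N\to\mu(P)$ together with the bound $|e^{i\boldsymbol z^T S_N}|=1$ allows conditional dominated convergence to identify the limit of the partial products with $\mathbb E[e^{i\boldsymbol z^T\mu(P)}\mid\boldsymbol\varpi]$, so that $\prod_{j=1}^\infty\varphi(\varpi_j\boldsymbol z)=\mathbb E[e^{i\boldsymbol z^T\mu(P)}\mid\boldsymbol\varpi]$ almost surely; this is precisely the conditional identity already noted in the remark following Assumption \textbf{A4}.

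Finally I would take the outer expectation over $\boldsymbol\varpi$ and apply the tower property. Each factor obeys $|\varphi(\varpi_j\boldsymbol z)|\leq1$, so the partial products are bounded in modulus by $1$ uniformly in $N$ and $\boldsymbol\varpi$, and dominated convergence yields
\begin{equation*}
\psi(\boldsymbol z)=\lim_{N\to\infty}\mathbb E\!\left[\,e^{i\boldsymbol z^T S_N}\,\right]=\lim_{N\to\infty}\mathbb E\!\left[\,\prod_{j=1}^N\varphi(\varpi_j\boldsymbol z)\,\right]=\mathbb E\!\left[\,\prod_{j=1}^\infty\varphi(\varpi_j\boldsymbol z)\,\right],
\end{equation*}
where the first equality uses $|e^{i\boldsymbol z^T S_N}|=1$ on the left. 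I expect the main obstacle to be the rigorous treatment of the conditional limit of the infinite product: one must verify that the partial products converge for almost every realisation of the weights rather than merely in the joint almost-sure sense, which follows from the joint convergence $S_N\to\mu(P)$ by Fubini's theorem and is exactly where Assumption \textbf{A4} is indispensable. One should also keep in mind that, for characteristic functions, a limit equal to $0$ is admissible, so $\prod_{j=1}^\infty\varphi(\varpi_j\boldsymbol z)$ must be read as the limit of its partial products and not in the classical nonvanishing sense of convergent infinite products.
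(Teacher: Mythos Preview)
Your argument is correct. Note, however, that the paper does not supply its own proof of this lemma: it is simply attributed to \cite{yamato1984characteristic}. The only in-text content related to a proof is the remark immediately following Assumption~\textbf{A4}, which records precisely the conditional identity $\prod_{j=1}^\infty\varphi(\varpi_j\boldsymbol z)=\mathbb E_\theta[\exp(i\boldsymbol z^T\mu(P))]$ that you derive; your proof then completes the argument by taking the outer expectation over the weights via the tower property and dominated convergence, which is the natural (and standard) route.
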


\begin{lemma}
\label{lem_stickbreakingfinite}
For all $d\geq 1$, if $P=\sum_{j=1}^\infty \varpi_j\delta_{\theta_j}\sim DP(M,G_0)$ 
\begin{equation}
    \mathbb E_P\left[{(\sum_{j=1}^{d+1} \varpi_j^2)^{-\frac d2}}\right]<\infty
\end{equation}{}
\end{lemma}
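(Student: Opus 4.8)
The plan is to reduce everything to a tail estimate on the stick‑breaking weights near the configuration where the first $d+1$ of them jointly carry almost no mass. Write $\varpi_j = v_j\prod_{i=1}^{j-1}(1-v_i)$ with $v_i\overset{iid}{\sim}\mathrm{Beta}(1,M)$, and put $R_j:=\prod_{i=1}^{j}(1-v_i)$, $R_0=1$, so that $\varpi_j=R_{j-1}-R_j$ and, by telescoping, $\sum_{j=1}^{d+1}\varpi_j=1-R_{d+1}$. The power–mean (Cauchy–Schwarz) inequality gives
$$\sum_{j=1}^{d+1}\varpi_j^2\ \geq\ \frac{1}{d+1}\Big(\sum_{j=1}^{d+1}\varpi_j\Big)^2\ =\ \frac{(1-R_{d+1})^2}{d+1},$$
hence $\big(\sum_{j=1}^{d+1}\varpi_j^2\big)^{-d/2}\leq (d+1)^{d/2}(1-R_{d+1})^{-d}$, and it suffices to show $\mathbb E\big[(1-R_{d+1})^{-d}\big]<\infty$. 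It is worth noting that one cannot afford the cruder bound $\big(\sum_{j=1}^{d+1}\varpi_j^2\big)^{-d/2}\leq\varpi_1^{-d}$, since $\mathbb E[\varpi_1^{-d}]=\int_0^1 x^{-d}M(1-x)^{M-1}\,dx=\infty$ for every $d\geq 1$; all $d+1$ weights are genuinely needed.

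Next I would pass from $1-R_{d+1}$ to $S:=\sum_{i=1}^{d+1}v_i$ by elementary inequalities. From $1-v\leq e^{-v}$ we get $R_{d+1}\leq e^{-S}$, so $1-R_{d+1}\geq 1-e^{-S}$; combining this with $1-e^{-x}\geq x/2$ on $[0,1]$ and $1-e^{-x}\geq 1-e^{-1}$ for $x\geq 1$ yields $1-R_{d+1}\geq \tfrac12 S$ on $\{S\leq 1\}$ and $1-R_{d+1}\geq 1-e^{-1}$ on $\{S>1\}$. Therefore
$$\mathbb E\big[(1-R_{d+1})^{-d}\big]\ \leq\ (1-e^{-1})^{-d}\ +\ 2^d\,\mathbb E\big[S^{-d}\1_{\{S\leq 1\}}\big],$$
so the whole lemma comes down to the finiteness of $\mathbb E[S^{-d}\1_{\{S\leq 1\}}]$.

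For that last expectation I would use the crude but sufficient tail bound: since the $v_i$ are nonnegative, $\{S\leq s\}\subseteq\{v_i\leq s\ \forall i\}$, and by independence $\mathbb P(S\leq s)\leq \mathbb P(v_1\leq s)^{d+1}=\big(1-(1-s)^M\big)^{d+1}\leq (c_M s)^{d+1}$ for $s\in(0,1)$, where $c_M:=\max(1,M)$ (using Bernoulli's inequality when $M\geq1$ and $(1-s)^M\geq 1-s$ when $M\leq1$). Expressing $\mathbb E[S^{-d}\1_{\{S\leq1\}}]=\int_0^\infty \mathbb P\big(S< t^{-1/d},\,S\leq1\big)\,dt$ by the layer–cake formula, the integrand is bounded by $1$ for $t\leq1$ and by $c_M^{\,d+1}t^{-(d+1)/d}$ for $t>1$, and $\int_1^\infty t^{-(d+1)/d}\,dt<\infty$ because $(d+1)/d>1$; hence the expectation is finite and the proof is complete. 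The only genuinely delicate point is precisely here: the singularity $s^{-d}$ is matched exactly by the rate $s^{\,d+1}$ at which the distribution of $S$ vanishes near the origin, which is why the statement must involve exactly $d+1$ weights — with fewer weights the exponent would drop and the integral would diverge. Everything else is routine.
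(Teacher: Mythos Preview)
Your argument is correct and takes a genuinely different route from the paper's. The paper writes the expectation as an integral in the variables $(V_1,\ldots,V_{d+1})$, changes variables to $(\varpi_1,\ldots,\varpi_{d+1})$ with the explicit stick--breaking Jacobian, splits off the region where some $\varpi_i>\varepsilon$, and on the remaining region bounds everything by the integral $J=\int_{\Delta_{d+1}}(\sum_j\varpi_j^2)^{-d/2}\,d\varpi$, which is then shown to be finite via hyperspherical coordinates (the volume element $r^d$ exactly cancels the singularity $r^{-d}$). You instead apply Cauchy--Schwarz to replace $\sum_j\varpi_j^2$ by $(\sum_j\varpi_j)^2/(d+1)=(1-R_{d+1})^2/(d+1)$, reduce to $\mathbb E[(1-R_{d+1})^{-d}]$, dominate this by a moment of $S=\sum_i v_i$, and finish with a tail bound plus the layer--cake formula. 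Your approach is more elementary and purely probabilistic: no Jacobians, no spherical coordinates, and the telescoping identity $\sum_{j\le d+1}\varpi_j=1-R_{d+1}$ does most of the work. The paper's approach is more analytic and, in principle, yields a slightly more explicit constant. Both proofs make the sharpness of ``$d+1$'' transparent in the same way---your $\mathbb P(S\le s)\lesssim s^{\,d+1}$ against the $s^{-d}$ singularity is exactly the probabilistic analogue of the paper's $r^d$ volume element cancelling $r^{-d}$.
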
  
\textit{Proof: See Appendix}

\begin{lemma}\label{lem:densitymeanP}
If $P=\sum_{j=1}^\infty \varpi_j\delta_{\theta_j}\sim DP(M,G_0) $ where $G_0$ has a density with respect to Lebesgue measure and verifies 
$$ \int |\varphi(u)| du = C_{G_0}<\infty ,$$
with $\varphi$ the characteristic function of $G_0$ and $C_{G_0}$ a constant, then under Assumption A4 $\mu(P)$ has a bounded density with respect to the Lebesgue measure. 
\end{lemma}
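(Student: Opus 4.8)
The plan is to reduce the lemma to the integrability of the characteristic function $\psi$ of $\mu(P)$. Indeed, by the Fourier inversion theorem, if $\psi\in L^1(\mathbb R^d)$ then $\mu(P)$ admits a density $g_\mu$ with respect to Lebesgue measure satisfying $\|g_\mu\|_\infty\le(2\pi)^{-d}\int_{\mathbb R^d}|\psi(\boldsymbol z)|\,d\boldsymbol z$. Since Assumption A4 guarantees (via \cite{feigin1989linear}) that $\mu(P)$ is well defined almost surely, and Lemma \ref{lem_yamato} identifies its characteristic function as $\psi(\boldsymbol z)=\mathbb E[\prod_{j=1}^\infty\varphi(\varpi_j\boldsymbol z)]$, it suffices to bound $\int_{\mathbb R^d}|\psi(\boldsymbol z)|\,d\boldsymbol z$.

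First I would move the expectation outside the integral. Because $|\varphi(\varpi_j\boldsymbol z)|\in[0,1]$ for every $j$, the product is non‑negative and jointly measurable, so Jensen's inequality and Tonelli's theorem give
\[
\int_{\mathbb R^d}|\psi(\boldsymbol z)|\,d\boldsymbol z\ \le\ \int_{\mathbb R^d}\mathbb E\Big[\prod_{j=1}^\infty|\varphi(\varpi_j\boldsymbol z)|\Big]d\boldsymbol z\ =\ \mathbb E\Big[\int_{\mathbb R^d}\prod_{j=1}^\infty|\varphi(\varpi_j\boldsymbol z)|\,d\boldsymbol z\Big].
\]
The crucial point is that, for a fixed realisation of the weights, one should retain a single factor of the infinite product, namely one associated with the largest weight among the first $d+1$. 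Let $j^\star$ be such that $\varpi_{j^\star}=\max_{1\le j\le d+1}\varpi_j$. Discarding all the other (sub‑unit) factors and then performing the change of variable $\boldsymbol u=\varpi_{j^\star}\boldsymbol z$,
\[
\int_{\mathbb R^d}\prod_{j=1}^\infty|\varphi(\varpi_j\boldsymbol z)|\,d\boldsymbol z\ \le\ \int_{\mathbb R^d}|\varphi(\varpi_{j^\star}\boldsymbol z)|\,d\boldsymbol z\ =\ \varpi_{j^\star}^{-d}\int_{\mathbb R^d}|\varphi(\boldsymbol u)|\,d\boldsymbol u\ =\ C_{G_0}\,\varpi_{j^\star}^{-d}.
\]
A pigeonhole inequality then ties $\varpi_{j^\star}$ to the quantity controlled by Lemma \ref{lem_stickbreakingfinite}: since $\varpi_{j^\star}^2=\max_{1\le j\le d+1}\varpi_j^2\ge(d+1)^{-1}\sum_{j=1}^{d+1}\varpi_j^2$, one gets $\varpi_{j^\star}^{-d}\le(d+1)^{d/2}\big(\sum_{j=1}^{d+1}\varpi_j^2\big)^{-d/2}$.

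Combining the displays yields
\[
\int_{\mathbb R^d}|\psi(\boldsymbol z)|\,d\boldsymbol z\ \le\ C_{G_0}\,(d+1)^{d/2}\,\mathbb E\Big[\Big(\textstyle\sum_{j=1}^{d+1}\varpi_j^2\Big)^{-d/2}\Big],
\]
which is finite by Lemma \ref{lem_stickbreakingfinite}; hence $\psi\in L^1(\mathbb R^d)$ and $\mu(P)$ has a density bounded by $(2\pi)^{-d}C_{G_0}(d+1)^{d/2}\mathbb E[(\sum_{j=1}^{d+1}\varpi_j^2)^{-d/2}]$, which proves the lemma. The hard part is precisely the choice made in the middle step: the more naive route, bounding $\int\prod_{j=1}^{d+1}|\varphi(\varpi_j\boldsymbol z)|\,d\boldsymbol z$ by a product of $L^{p_j}$-norms, produces a term of the form $\mathbb E[\prod_{j=1}^{d+1}\varpi_j^{-d/(d+1)}]$, which is finite only under an additional moment restriction on the concentration parameter $M$; keeping a single dominant factor instead reduces the problem, with no constraint on $M$, exactly to Lemma \ref{lem_stickbreakingfinite}, which is why that lemma is stated with $d+1$ weights.
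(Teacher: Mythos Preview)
Your proof is correct and follows essentially the same route as the paper: use Lemma~\ref{lem_yamato} for $\psi$, bound the infinite product by a single factor corresponding to a largest weight, change variables to pull out $C_{G_0}$, and then reduce $\varpi_{j^\star}^{-d}$ to $(\sum_{j=1}^{d+1}\varpi_j^2)^{-d/2}$ so that Lemma~\ref{lem_stickbreakingfinite} applies. The only cosmetic difference is that the paper keeps the global maximum $p_{(1)}=\max_j\varpi_j$ rather than the maximum over the first $d+1$ weights, but the pigeonhole step with $r=d+1$ is identical.
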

\begin{proof}[Proof of Lemma \ref{lem:densitymeanP}]
Using Lemma \ref{lem_yamato}, the characteristic function $\psi$ of $\mu(P)$ is equal to
$$  \psi(\boldsymbol z)=\mathbb E \prod_{j=1}^\infty \varphi(\varpi_j\boldsymbol z),\; P =\sum_j \varpi_j \delta_{\theta_j}$$
where $\varphi$ is the characteristic function of $G_0$. Let $p_{(1)} = \max_{j} \varpi_j $ and note that, in particular, 
  $|\psi(\boldsymbol z)| \leq \mathbb E  |\varphi(p_{(1)}\boldsymbol z)| $. Hence, 
  $$ \int |\psi(\boldsymbol z)|d\boldsymbol z \leq \mathbb E_{p_{(1)}}\left[\frac{ 1 }{ p_{(1)}^d } \int |\varphi(u)| du \right] \leq C_{G_0} \mathbb E_{p_{(1)}}\left[\left( \frac{ \sum_{j=1}^r \varpi_j^2}{ r} \right)^{-d/2}\right],\text{ for some $r>0$} $$
Using Lemma \ref{lem_stickbreakingfinite}, with $r=d+1$, we obtain the result. 
\end{proof}

\section{Conclusion and prospectives}\label{sec:conclu}

This article addressed the difficult problem of evidence estimation for both finite and Dirichlet Process mixture models. 
For finite mixtures, we identified two methods, namely ChibPartition and SIS, that scale with the number of mixture components and/or with the number of observations, as opposed to classical algorithms. Furthermore, the adaptive SMC algorithm that we have presented shows good performance and can be used in the non-conjugate case. For the DPM, we noticed that there does not seem to exist any reference method widely used by practitioners. We benchmarked \cite{basu2003marginal}'s approach with other algorithms, including a method based on \cite{geyer1994estimating}'s reverse logistic regression that appears reliable and scales better with the number of observations $n$. 

An immediate application are goodness-of-fit tests that compare a parametric null to a nonparametric alternative. We formalized this procedure theoretically by establishing the consistence of the Bayes Factor for this kind of scenario, when the nonparametric alternative is a ‘strongly-identifiable' Dirichlet Process mixture model, which includes location mixture kernels.

An interesting research avenue would be to identify scalable evidence estimation techniques for non-conjugate Dirichlet Process mixtures. The RLR-Prior algorithm that we introduced can be a good first approach but suffers from a large variance when the number of observation increases.
As to the asymptotics of the marginal likelihood under a DPM, it would be interesting to extend Theorem \ref{UBMDP1} to the location-scale Dirichlet Process mixture model. 

\bibliographystyle{ims}
\bibliography{ref}

\appendix
\section{SIS sampling strategy for mixture models}
\label{sec:appendixSISFM}
The likelihood of a given partition of the data $\mathcal C(\boldsymbol z)=\{C_1(\boldsymbol z),\dots,C_K(\boldsymbol z)\}$ is given by 
\begin{equation}
    P(\boldsymbol y|\mathcal C(\boldsymbol z))=\prod_{k=1}^K p(\boldsymbol y|C_k(\boldsymbol z))=\prod_{k=1}^K\int_\Theta \prod_{z_i=k}p(y_i|\theta)G_0(d\theta):=\prod_{k=1}^K m(C_k(\boldsymbol z))
\end{equation}
For conjugate finite mixtures, this integral is easily computable in closed form.

Notice that the parameters $\boldsymbol\theta$ and the weights $\boldsymbol\eta$ are independent a posteriori and their respective augmented posteriors are given by
\begin{equation}
    P(d\theta_k|\boldsymbol y, \boldsymbol z)=\frac{\prod_{z_i=k}p(y_i|\theta_k)G_0(d\theta_k)}{\int_\Theta\prod_{z_i=k}p(y_i|\theta)G_0(d\theta)}=\frac{\prod_{z_i=k}p(y_i|\theta_k)G_0(d\theta_k)}{m(C_k(\boldsymbol z))}
\end{equation}
and
\begin{equation}
    P(\boldsymbol\eta|\boldsymbol z)=\mathcal D(n_1(\boldsymbol z)+\alpha_1,\dots,n_K(\boldsymbol z)+\alpha_K)
\end{equation}
where $n_k(\boldsymbol z)=\sum_{i=1}^n\1_{\{z_i=k\}},\; k=1,\dots,K$, provided the weights $\boldsymbol\eta$ are given a Dirichlet prior $\mathcal D(\alpha_1,\dots,\alpha_K)$.\\
The weights $w(\boldsymbol z,\boldsymbol y)$ can be derived by noticing that the posterior predictive distribution for all $y_i,$ where $i\geq2$ is given by
\begin{align*}
    p(y_i|\boldsymbol z_{1:i-1},\boldsymbol y_{1:i-1})&=\int\int p(y_i|\boldsymbol\eta,\boldsymbol\theta)\prod_{k=1}^KP(d\theta_k|\boldsymbol y_{1:i-1},\boldsymbol z_{1:i-1}) P(d\boldsymbol\eta|\boldsymbol z_{1:i-1})\\
    &=\sum_{k=1}^K\int \int \eta_k p(y_i|\theta_k)p(\theta_k|\boldsymbol y_{1:i-1},\boldsymbol z_{1:i-1}) d\boldsymbol\theta P(d\boldsymbol\eta|\boldsymbol z_{1:i-1})\\
    &=\sum_{k=1}^K \frac{m(C_k(\boldsymbol z_{1:i-1})\cup \{y_i\})}{m(C_k(\boldsymbol z_{1:i-1}))}\int \eta_k P(d\boldsymbol\eta|\boldsymbol z_{1:i-1})\\
    &=\sum_{k=1}^K \frac{m(C_k(\boldsymbol z_{1:i-1})\cup \{y_i\})}{m(C_k(\boldsymbol z_{1:i-1}))}\frac{n_k(\boldsymbol z_{1:i-1})+\alpha_k}{\sum_{k=1}^K n_k(\boldsymbol z_{1:i-1})+\alpha_k}
\end{align*}
The expression above determines completely the quantity of interest $w(\boldsymbol z,\boldsymbol y)$.\\

The allocation $z_i$ where $i\geq2$ should be drawn sequentially from the categorical distribution 
$$p(z_i=k|\boldsymbol y_{1:i},\boldsymbol z_{1:i-1})\propto \frac{m(C_k(\boldsymbol z_{1:i-1})\cup \{y_i\})}{m(C_k(\boldsymbol z_{1:i-1}))}\frac{n_k(\boldsymbol z_{1:i-1})+\alpha_k}{\sum_{k=1}^K n_k(\boldsymbol z_{1:i-1})+\alpha_k},\text{ for } k=1,\dots,K$$

\section{Additional lemma \label{sec:pr:LBf1}}
\begin{lemma}\label{lem:Constant}
Let $P=\sum_{i=1}^\infty \varpi_i\delta_{\theta_i}$ be a realisation of the Dirichlet Process. Define
\begin{equation}
\begin{split}
    \Delta(P)&=  \int_{A_0}f_\theta dP(\theta)  + 
    \sum_{j=1}^{k_0} \left[p_j-p_j^0\right] f_{\theta_j^0} +  \int_{B(\theta_j^0, \epsilon)}\left(\theta - \theta_j^0\right)^T\nabla f_{\theta_j^0}dP(\theta)  \\ 
    & \quad + \frac{ 1 }{2}\int_{B(\theta_j^0, \epsilon)}\left(\theta - \theta_j^0\right)^TD^2 f_{\theta_j^0}\left(\theta - \theta_j^0\right)dP(\theta)
    \end{split}
\end{equation}
where for all $j=1,\dots,k_0,$
\begin{equation}\label{DPweights}
    p_j=\sum_{i:\theta_i\in B(\theta_j^0,\epsilon)}\varpi_i
\end{equation}
for some $\epsilon>0$. Then there exists a constant $c(f_0)$ depending only on $f_0$ such that
\begin{equation} \label{LBf1}
\begin{split}
   \| \Delta(P)\|_1 &\geq c(f_0)\left[ P(A_0)  + 
    \sum_{j=1}^{k_0} \left|p_j-p_j^0\right|+ \left\|\int_{B(\theta_j^0, \epsilon)}\left(\theta - \theta_j^0\right) dP(\theta)\right\|+\right.\\
    &\quad\left.
   \int_{B(\theta_j^0, \epsilon)}\left\|\theta - \theta_j^0\right\|^2 dP(\theta)\right] \\
   &:= c(f_0)N(P)
   \end{split}
\end{equation}
\end{lemma}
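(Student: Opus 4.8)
The plan is to prove the statement by a compactness-and-contradiction argument in the spirit of Lemma~3.8 of \cite{gassiat2014local}; the hypothesis that $P$ is a realisation of the Dirichlet process plays no role, so I would establish \eqref{LBf1} for an arbitrary probability measure $P$ on $\Theta$. Write $B_j=B(\theta_j^0,\epsilon)$ (with $\epsilon$ fixed small enough that the $B_j$ are pairwise disjoint and contained in the $\delta_0$-neighbourhoods of Assumption~\textbf{A1}), set $p_j=P(B_j)$, $\beta_j(P)=\int_{B_j}(\theta-\theta_j^0)\,dP(\theta)\in\mathbb R^d$ and $\Sigma_j(P)=\int_{B_j}(\theta-\theta_j^0)(\theta-\theta_j^0)^T\,dP(\theta)\in S_d^+$, so that, using $\mathrm{tr}$ for the trace and $\langle\cdot,\cdot\rangle$ for the Frobenius inner product,
\[
\Delta(P)=\int_{A_0}f_\theta\,dP+\sum_{j=1}^{k_0}\Big[(p_j-p_j^0)f_{\theta_j^0}+\beta_j(P)^T\nabla f_{\theta_j^0}+\tfrac12\langle \Sigma_j(P),D^2f_{\theta_j^0}\rangle\Big]
\]
and $N(P)=P(A_0)+\sum_j\big(|p_j-p_j^0|+\|\beta_j(P)\|+\mathrm{tr}\,\Sigma_j(P)\big)$. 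Since the $B_j$ are bounded, $N(P)\le 1+k_0(1+\epsilon+\epsilon^2)$ is bounded over $P$, and $N(P)=0$ forces $P=P_0$, for which $\Delta(P_0)=0$ and \eqref{LBf1} is trivial; it therefore suffices to show that $\inf\{\|\Delta(P)\|_1/N(P):N(P)>0\}$ is strictly positive, and one may take $c(f_0)$ to be this infimum.

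Suppose not: there are probability measures $P_n$ with $N_n:=N(P_n)>0$ and $\|\Delta(P_n)\|_1/N_n\to0$. Divide the functionals above by $N_n$: the normalised scalars $\tilde p_{0,n}=P_n(A_0)/N_n$, $\tilde a_{j,n}=(p_j^n-p_j^0)/N_n$, $\tilde\beta_{j,n}=\beta_j(P_n)/N_n$ and matrices $\tilde\Sigma_{j,n}=\Sigma_j(P_n)/N_n$ all lie in bounded sets (the last because $\|\tilde\Sigma_{j,n}\|\le\mathrm{tr}\,\tilde\Sigma_{j,n}$ is a summand of $1=\tilde p_{0,n}+\sum_j(|\tilde a_{j,n}|+\|\tilde\beta_{j,n}\|+\mathrm{tr}\,\tilde\Sigma_{j,n})$), so along a subsequence they converge to $\bar p_0\ge0$, $\bar a_j\in\mathbb R$, $\bar\beta_j\in\mathbb R^d$, $\bar\Sigma_j\in S_d^+$ with $\bar p_0+\sum_j(|\bar a_j|+\|\bar\beta_j\|+\mathrm{tr}\,\bar\Sigma_j)=1$. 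Moreover $P_n|_{A_0}/N_n$ is a sub-probability measure on the locally compact set $A_0$, so, passing to a further subsequence, it converges vaguely to some $\bar\mu_0$; by Assumption~\textbf{A3} the map $\theta\mapsto f_\theta(y)$ is bounded, continuous and vanishes at $\bar\Theta\cap\Theta^c$, hence lies in $C_0(A_0)$ for each fixed $y$, so $\int_{A_0}f_\theta(y)\,d\big(P_n|_{A_0}/N_n\big)(\theta)\to f_{\bar\mu_0}(y):=\int_{A_0}f_\theta(y)\,d\bar\mu_0(\theta)$ pointwise. Consequently $\Delta(P_n)/N_n$ converges pointwise to $f_{\bar\mu_0}+\sum_j\big[\bar a_jf_{\theta_j^0}+\bar\beta_j^T\nabla f_{\theta_j^0}+\tfrac12\langle\bar\Sigma_j,D^2f_{\theta_j^0}\rangle\big]$, and since $\|\Delta(P_n)/N_n\|_1\to0$, Fatou's lemma forces this pointwise limit to vanish identically.

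Writing $\bar\Sigma_j=\gamma_j^2$ for its positive semidefinite square root $\gamma_j$ and $\bar\mu_0=\alpha_0\nu_0$ with $\nu_0$ a probability measure on $A_0$ when $\alpha_0=\bar\mu_0(A_0)>0$, the resulting identity is exactly of the form ruled out by the strong identifiability Assumption~\textbf{A2}; hence $\bar\mu_0=0$, $\bar a_j=0$, $\bar\beta_j=0$ and $\bar\Sigma_j=0$ for every $j$, and the normalisation then forces $\bar p_0=1$. To reach a contradiction it remains to bound $\|\Delta(P_n)\|_1$ from below: because the $f_\theta$ are nonnegative densities, Tonelli's theorem gives $\big\|\int_{A_0}f_\theta\,dP_n\big\|_1=P_n(A_0)$, while the remaining terms of $\Delta(P_n)$ have total $L^1$-norm at most $C\sum_j(|p_j^n-p_j^0|+\|\beta_j(P_n)\|+\mathrm{tr}\,\Sigma_j(P_n))=C\,(N_n-P_n(A_0))$, where $C$ depends only on $f_0$ through the finite quantities $\|f_{\theta_j^0}\|_1$, $\|\nabla f_{\theta_j^0}\|_1$ and $\|D^2f_{\theta_j^0}\|_1$ (finite under the regularity assumptions). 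Dividing by $N_n$ gives $\|\Delta(P_n)\|_1/N_n\ge(1+C)\tilde p_{0,n}-C\to(1+C)-C=1$, contradicting $\|\Delta(P_n)\|_1/N_n\to0$; this produces the required $c(f_0)>0$. The delicate point is the vague-limit step for the ``far'' part $P_n|_{A_0}/N_n$: because $A_0$ is unbounded, mass may escape to infinity and only a vague limit survives, so one must know the escaped mass contributes nothing to the limit function — precisely what Assumption~\textbf{A3} provides. A secondary subtlety is that $\bar p_0$ is a priori arbitrary in $[0,1]$, so the crude $L^1$ lower bound only becomes decisive once the identifiability argument has pinned $\bar p_0$ to $1$.
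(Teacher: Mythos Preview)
Your proof is correct and follows the same compactness-and-contradiction scheme as the paper: normalise by $N(P)$, extract convergent subsequences of the scalar, vector and matrix coefficients together with a vague sub-probability limit on $A_0$, pass to the limit in $\tilde\Delta$, and invoke the strong identifiability Assumption~\textbf{A2}. Where you genuinely refine the paper's argument is in your final ``crude lower bound'' step. The paper concludes directly that \textbf{A2} forces all limiting coefficients (including $\alpha_0$) to vanish, which contradicts the normalisation. But if the vague limit $\nu_0$ happens to be the zero measure---the mass on $A_0$ having escaped to infinity---then $\alpha_0 f_{\nu_0}$ vanishes identically regardless of $\alpha_0$, so \textbf{A2} only pins down $\bar a_j,\bar\beta_j,\bar\Sigma_j$ and leaves $\bar p_0=1$ uncontradicted. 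Your Tonelli observation $\big\|\int_{A_0}f_\theta\,dP_n\big\|_1=P_n(A_0)$, combined with the trivial $L^1$ bound on the remaining terms, gives $\|\Delta(P_n)\|_1/N_n\ge(1+C)\tilde p_{0,n}-C\to 1$ and closes this gap cleanly. One minor caveat: your bound on the ``near'' part uses $\|\nabla f_{\theta_j^0}\|_1<\infty$, whereas Assumption~\textbf{A1} only postulates $H_1\in L^2$; since the paper itself invokes $\|H_1\|_1$ in the proof of \eqref{my2}, this appears to be an implicit integrability assumption shared by both arguments.
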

\noindent\textit{Proof.}\\
Let 
\begin{align*}
    \tilde{\Delta}(P) &= \frac{ \Delta(P) }{ N(P)}\\
     & = \left\| \alpha_0 \int_{A_0}f_\theta(x) d\nu_0(x)  + 
    \sum_{j=1}^{k_0} \alpha_j f_{\theta_j^0} +  \beta_j^T\nabla f_{\theta_j^0}  + \frac{ 1 }{2}tr(D^2 f_{\theta_j^0}\gamma_j) \right\|_1
\end{align*} 
where 
\begin{equation*}
\begin{split}
    \alpha_0  &= \frac{P(A_0) }{ N(P)} , \quad \nu_0(d\theta) = \frac{ P(d\theta)\1_{A_0} }{ P(A_0)} , \quad \alpha_j = \frac{ p_j-p_j^0 }{ N(P)} \\
    \beta_j &= \frac{1}{ N(P)  }  \int_{B(\theta_j^0,\epsilon)}(\theta - \theta_j^0) dP(\theta), \quad 
    \gamma_j =  \frac{ 1 }{2N(P)}\int_{B(\theta_j^0, \epsilon)}(\theta - \theta_j^0) (\theta - \theta_j^0)^T dP(\theta)
\end{split}
\end{equation*}
and assume that \eqref{LBf1} does not hold. Then there exists a sequence $\alpha^m, \beta^m, \gamma^m, \nu_0^m$ along which $   \tilde \Delta(P)$ goes to zero. 
Note that by construction $tr(\gamma_j) = \int_{B(\theta_j^0, \epsilon)} \| \theta -\theta_j^0\|^2 dP(\theta)$ and that 
$\alpha^m, \beta^m, \gamma^m$ belong to a compact set so that there exists a sub-sequence  which is convergent to some value $\alpha, \beta, \gamma$. Similarly, $\nu_0^m$ is a sequence of measure with mass bounded by 1, so it converges vaguely to a sub-probability measure $\nu_0$ on $A_0$ along a subsequence (also denoted $\nu_0^m$) and since for all $x$, $f_\theta(x) $ is continuous in $\theta$ and converges to $0$ on the boundary of $\Theta$, 
$$\int_{A_0}f_\theta(x) d\nu_0^m \rightarrow \int_{A_0}f_\theta(x) d\nu_0, \quad \forall x$$
Now $\int_{A_0}f_\theta(x) d\nu_0^m \leq \sup_\theta \|f_\theta(\cdot)\|_\infty $ hence on any compact subset $B$ of $\mathbb R^{d}$, 
$$\|\1_{B}\int_{A_0}f_\theta(x) (d\nu_0^m - d\mu_0)(\theta) \|_1 =o(1)$$ so that for all compact $B$, 
at the limit, for all $x\in B$
$$  \alpha_0 \int_{A_0}f_\theta(x) d\nu_0(\theta)  + 
    \sum_{j=1}^{k_0} \alpha_j f_{\theta_j^0}(x) +  \beta_j^T\nabla f_{\theta_j^0}(x)  + \frac{ 1 }{2}tr(D^2 f_{\theta_j^0}(x)\gamma_j) =0$$
    since the relation is true for all $B$ it is true for all $x\in \mathbb R^d$. 
    The strong identifiability assumption \textbf{A2} implies that $\alpha=0$, $\beta= 0$ and $\gamma=0$, which is not possible since $\sum_j |\alpha_j| + |\beta_j| + |\gamma_j| =1$. Hence  \eqref{LBf1} is valid. 
\section{Proof of Lemma \ref{lem_stickbreakingfinite}}
We first notice that
\begin{align}\label{jacob}
    \begin{cases}{}
    V_1=\varpi_1\\
    V_2(1-V_1)=\varpi_2\\
    \vdots\\
    V_{d+1}(1-V_d)\dots (1-V_1)=\varpi_{d+1}
    \end{cases}
    \Leftrightarrow
    \begin{cases}{}
    V_1=\varpi_1\\
    V_2=\frac{\varpi_2}{1-\varpi_1}\\
    \vdots\\
    V_{d+1}=\frac{\varpi_{d+1}}{1-\sum_i^d \varpi_i}
    \end{cases}
\end{align}{} 
where $V_i\overset{i.i.d}{\sim}\mathcal{B}(1,M)$.
Hence,
\begin{equation*}
  \begin{split}
    I=&\mathbb E\left[\frac{1}{(\sum_{j=1}^{d+1} \varpi_j^2)^{\frac d2}}\right]\\
    =&\left(\frac{\Gamma(M)}{\Gamma(1+M)}\right)^{d+1}\int_{[0,1]^{d+1}}\frac{(1-V_1)^{M-1}\dots(1-V_{d+1})^{M-1}}{(V_1^2+\dots+V_{d+1}^2(1-V_d)^2\dots(1-V_1)^2)^{d/2}}
    dV_1\dots dV_{d+1}
\end{split}  
\end{equation*}

Let $B_\varepsilon=\{(V_1,\dots,V_{d+1}):\exists i\in\{1,\dots,d+1\},\;\varpi_i>\epsilon\}$ and fix $\varepsilon=1/(2(d+1))$. Then,

\begin{equation*}
  \begin{split}
    I\leq&\left(\frac{\Gamma(M)}{\Gamma(1+M)}\right)^{d+1}\left (C\varepsilon^{-d}+\int_{[0,1]^{d+1}\cap B_\varepsilon^c}\frac{(1-V_1)^{M-1}\dots(1-V_{d+1})^{M-1}}{(V_1^2+\dots+V_{d+1}^2(1-V_d)^2\dots(1-V_1)^2)^{d/2}}
    dV_1\dots dV_{d+1}\right )
\end{split}  
\end{equation*}
for some constant $C>0$.
We then use the change of variable given by (\ref{jacob}) which Jacobian is 
\begin{equation*}
    \det(J)=\frac{1}{1-\varpi_1}\times\frac{1}{1-\varpi_1-\varpi_2}\times\dots\times\frac{1}{1-\sum_i^d \varpi_i}
\end{equation*}{}
Hence,
\begin{equation*}
    I\leq\left(\frac{\Gamma(M)}{\Gamma(1+M)}\right)^{d+1}\left(C\varepsilon^{-d}+\int_{\Delta_{d+1}\cap B_\varepsilon^c}\frac{1}{(\sum_{j=1}^{d+1}\varpi_j^2)^\frac d2}\times\frac{\left(1-\sum_{j=1}^{d+1}\varpi_j\right)^{M-1}}{(1-\varpi_1)\dots(1-\sum_{j=1}^d \varpi_j)}d\varpi_1\dots d\varpi_{d+1}\right)
\end{equation*}
where $\Delta_{d+1}=\{\boldsymbol q \in \mathbb R^{d+1}: 0\leq \varpi_j\leq1, \forall j$ and $\sum_j \varpi_j\leq1\}$.\\

\begin{align*}
    I&\leq \left(\frac{\Gamma(M)}{\Gamma(1+M)}\right)^{d+1}\left(C  \varepsilon^{-d}+2^d\int_{\Delta_{d+1}\cap B_\varepsilon^C}
    \frac{\left(1-\sum_{j=1}^{d+1}\varpi_j\right)^{M-1}}{(\sum_{j=1}^{d+1}\varpi_j^2)^\frac d2}d\varpi_1\dots d\varpi_{d+1}\right)
    \intertext{If $M\geq 1$}
    I&\leq \left(\frac{\Gamma(M)}{\Gamma(1+M)}\right)^{d+1}\left(\varepsilon^{-d}C+2^d\int_{\Delta_{d+1}}\frac{1}{(\sum_{j=1}^{d+1}\varpi_j^2)^\frac d2}d\varpi_1\dots d\varpi_{d+1}\right)
    \intertext{If $M< 1$}
    I&\leq \left(\frac{\Gamma(M)}{\Gamma(1+M)}\right)^{d+1}\left(\varepsilon^{-d}C+2^{d+1-M}\int_{\Delta_{d+1}}\frac{1}{(\sum_{j=1}^{d+1}\varpi_j^2)^\frac d2}d\varpi_1\dots d\varpi_{d+1}\right)
\end{align*}
Let $J=\int_{\Delta_{d+1}}
    1/(\sum_{j=1}^{d+1}\varpi_j^2)^\frac d2d\varpi_1\dots d\varpi_{d+1}$. We shall prove that $J$ is finite.\\
    
Using the hyperspherical change of variable
\begin{align*}{}
\begin{cases}
\varpi_1=r\cos \theta_1\\
\varpi_2=r\sin \theta_1\cos\theta_2\\
\vdots\\
\varpi_{d}=r\sin\theta_1\dots\cos\theta_d\\
\varpi_{d+1}=r\sin\theta_1\dots\sin\theta_d
\end{cases}{}
\end{align*}
where $\theta_1,\dots,\theta_{d-1}\in[0,\pi]$ and $\theta_d\in[0,2\pi)$.
The Jacobian of this transformation is bounded by $r^d$, and noticing that $r=\left(\sum_{j=1}^{d+1} \varpi_j^2\right)^\frac 12$, we get
\begin{align*}
    J\leq \int_0^1\int_0^{2\pi}\dots\int_0^\pi\frac{r^d}{r^{d}} d\theta_1\dots d\theta_d dr
\end{align*}
Hence,
\begin{equation*}
    J\leq 2\pi\left(\frac\pi2\right)^{d-1}
\end{equation*}
and $$\mathbb E\left[\frac{1}{(\sum_{j=1}^{d+1} \varpi_j^2)^{\frac d2}}\right]<\infty$$ 

\section{Proof of Theorem \ref{UBMDP1}}
To prove Theorem \ref{UBMDP1}, we  prove that the associated posterior concentrates in $L_1$ norm at the rate $(\log n)^q/\sqrt{n}$  for some $q>0$ and then we bound from above $\Pi( \{ P: \|f_P - f_0\|_1 \leq (\log n)^q/\sqrt{n}\})$. More precisely we  first prove that 
\begin{equation}\label{my2}
\mathbb P_{f_0}\left( \int_{\|f_P - f_0\|_1> \delta_n}e^{\ell_n(P) - \ell_n(P_0)}d\Pi(P) >  n^{-D(k_0)-t}\right)   = o(1) 
\end{equation}
where $ \delta_n =   (\log n)^q/\sqrt{n}$, $ D(k_0) = (k_0-1 + dk_0)/2$ and $\ell_n(P) = \log f_P(\mathbf y)$ is the log-likelihood.

Then, to bound from above $\int_{\|f_P - f_0\|_1\leq \delta_n}e^{\ell_n(P) - \ell_n(P_0)}d\Pi(P) $ we use a simple Markov inequality:
\begin{equation}\label{my1}
\mathbb P_{f_0}\left( \int_{\|f_P - f_0\|_1\leq \delta_n}e^{\ell_n(P) - \ell_n(P_0)}d\Pi(P) >  n^{-D(k_0)-t}\right)\leq 
 n^{D(k_0)+t} \Pi (\|f_P - f_0\|_1\leq \delta_n )
\end{equation}
and we can conclude the proof as soon as we bound $\Pi (\|f_P - f_0\|_1\leq \delta_n )$. This is the most difficult part of the proof and is done in Section \ref{sec:L1neighbour}.

\subsection{Proof of $L_1$-concentration rate \eqref{my2}}
Throughout the proof $C$ denotes a generic constant depending only on $f_0$. 
Let $\eta>0$ be arbitrarily small, and 
$$\mathcal F_n = \{ f_P;\, P(\Theta_n^c)\leq \eta \delta_n \}, \quad \Theta_n = \{ \|\theta\| \leq (\log n)^a \},$$ for some $a>0$. 
To prove \eqref{my2}, we first prove that the covering number of each slice  
$$ \mathcal F_{n,\ell}  = \{ f_P \in \mathcal F_n; \|f_P - f_0\|_1 \in (\ell \delta_n, (\ell+1)\delta_n) \}$$
by $L_1$ balls of radius $\zeta \ell \delta_n$ is bounded by 
\begin{equation}\label{entropy:slice}
\mathcal N_{n,\ell} \lesssim e^{C(\log n)^{ad} \log (1/\eta)}.
\end{equation} 
Using Lemma \ref{lem:Constant}, we have that for all $f_P\in \mathcal F_{n,\ell}  $, if $\ell \delta_n \leq \epsilon_0$ for some $\epsilon_0>0$ small enough, 
$$ P(A_0) + \sum_{j=1}^{k_0}\left[ |p_j-p_j^0|+ \| \mu(P_j) -\theta_j^0\| + \int_{B(\theta_j^0, \epsilon) } \|\theta - \theta_j^0\|^2 dP(\theta) \right] \leq (\ell +1) \delta_n/c(f_0)$$ 
where $\mu(P_j):=\int_{B(\theta_j^0,\epsilon)}\theta dP(\theta)$.\\
Let $\xi >0$, we consider a covering of $A_{0,n}=A_0\cap\Theta_n$ with balls $U_{i,0} $ with center $\vartheta_i$ and radius $\xi$ and $N_0(\xi)$, the number of such balls, is bounded by $(C |\Theta_n|/\xi)^d$. Define $\tilde U_{1,0}=U_{1,0}$ and $\tilde U_{i,0}=U_{i,0}\setminus\cup_{j=1}^{i-1}  U_{j,0}$. Consider $P,P'$ such that $f_P, f_P'\in \mathcal F_{n,\ell}  $, 
\begin{equation}\label{PA0}
\begin{split}
    \int \left| \int_{A_0} f_\theta(x) (dP-dP')(\theta) \right|dx &\leq \int \left|\sum_{i=1}^{N_0(\xi)}\left\{\int_{\tilde U_{i,0}}f_\theta(x)-f_{\vartheta_i}(x)(dP-dP')(\theta)\right.\right.\\
    &\qquad \left.\left. +\int_{\tilde U_{i,0}}f_{\vartheta_i}(x)(dP-dP')(\theta)\right\}\right|\\
    &\leq\int \sum_{i=1}^{N_0(\xi)}\left\{\int_{\tilde U_{i,0}}|f_\theta(x)-f_{\vartheta_i}(x)|dP(\theta)\right.\\
    &\qquad \left. +\int_{\tilde U_{i,0}}|f_\theta(x)-f_{\vartheta_i}(x)|dP'(\theta)+f_{\vartheta_i}(x)\left|\int_{\tilde U_{i,0}}d(P-P')(\theta)\right|\right\}\\
    &\leq||H_1(x)||_1\xi\left[P(A_0)+P'(A_0)\right]+\sum_{i=1}^{N_0(\xi)}\left|P(\tilde U_{i,0})-P'(\tilde U_{i,0})\right|\\
    &\leq C\xi (\ell+1)\delta_n+\sum_{i=1}^{N_0(\xi)} \left|P(\tilde U_{i,0})-P'(\tilde U_{i,0})\right|
    \end{split}
\end{equation}
Hence if for all $i\leq N_0(\xi)$, $|P(\tilde U_{i,0})-P'(\tilde U_{i,0})|\leq \xi / N_0(\xi)$,
\begin{equation}\label{PA0}
\begin{split}
    \left\| \int_{A_0} f_\theta(\cdot) (dP-dP')(\theta) \right\|_1 &\leq C\xi \ell \delta_n .
    \end{split}
\end{equation}
Moreover for each $j=1,\dots, k_0$
\begin{equation*}
\begin{split}
     \int_{B(\theta_j^0, \epsilon)} f_\theta(x) (dP-dP')(\theta)  &= \int_{B(\theta_j^0, \epsilon)}\left\{f_{\theta_j^0}(x)+(\theta-\theta_j^0)^T\nabla f_{\theta_j^0}(x)+\frac12(\theta-\theta_j^0)^TD^2f_{\theta_j^0}(x)(\theta-\theta_j^0)\right.\\
     &\qquad \left. + \mathcal O\left(\|\theta- \theta_j^0\|^{2+\delta}\right) \right\}\left[dP-dP'\right](\theta)\\
     &=\left[f_{\theta_j^0}(x)-{\theta_j^0}^T\nabla f_{\theta_j^0}(x)\right]\left(p_j-p_j'\right) + \left[\mu(P_j)-\mu(P_j')\right]^T \nabla f_{\theta_j^0}(x)\\
     & \qquad + \frac12 tr\left[D^2f_{\theta_j^0}(x )\int (\theta- \theta_j^0)(\theta- \theta_j^0)^Td(P_j-P_j')(\theta)\right]   \\
     & \qquad + \mathcal O\left(\int \|\theta- \theta_j^0\|^{2+\delta} d(P_j-P_j')(\theta)\right) \\
     \end{split}
\end{equation*}
where $p$ and $p'$ are defined as in \eqref{DPweights}.\\
Hence if $|p_j-p_j' | \leq \xi (\ell+1)\delta_n$ ,
$|\mu(P_j)-\mu(P_j')| \leq \xi (\ell+1)\delta_n $ and 
$$\left\| \int (\theta- \theta_j^0)(\theta- \theta_j^0)^Td(P_j-P_j')(\theta) \right\|_F \leq \xi (\ell+1)\delta_n,$$
\begin{equation*}
\begin{split}
    &\left\| \int_{B(\theta_j^0, \epsilon)} f_\theta(\cdot) (dP-dP')(\theta) \right\|_1 \\
    &\leq \xi (\ell+1)\delta_n\left( 1 + \|\theta_j^0\|\| H_1(\cdot)\|_1+\|H_1(\cdot)\|_1
     + \frac{ \|H_2(\cdot)\|_1 }{2} \right)
    + \mathcal O( \epsilon^\delta (\ell+1)\delta_n )\\
    & \leq C(\epsilon^\delta+\xi) \ell \delta_n ).
     \end{split}
\end{equation*}
So that $\|f_P-f_{P'}\|_1 \leq C \xi\ell \delta_n$. \\

Since for all $j$, when $f_P \in \mathcal F_{n,\ell}$, $\mu(P_j) \in B(\theta_j^0, C\ell\delta_n)$, the covering number for $\{ \mu(P_j); \, f_P \in \mathcal F_{n,\ell}\}$  by balls of radius is $\xi \ell \delta_n$ is bounded by $(C\xi)^{-d}$ and similarly the covering numbers for the terms $ \int (\theta- \theta_j^0)(\theta- \theta_j^0)^Td P_j(\theta)$ and $p_j$ are bounded respectively by 
$(C \xi )^{-d(d+1)/2}$ and $(C \xi )^{-1}$. 

This implies that by choosing $\xi$ and small enough, 
\begin{align*}
    \mathcal N_{n,\ell} & \lesssim \xi ^{-k_0[1+d+d(d+1)/2]}(N_0(\xi)/\xi)^{N_0(\xi)} \\
    & \leq \exp\left[\left(C|\Theta_n|/\xi\right)^d\log(|\Theta_n|/\xi)+\log(1/\xi)\right]\\
    & \leq \exp\left[\left(C|\Theta_n|/\xi\right)^d(\log(|\Theta_n|)+\log(1/\xi))\right]\\
    & \leq \exp\left[\left(C(\log n)^a/\xi\right)^d(\log((\log n)^a)+\log(1/\xi))\right]
\end{align*}
and since $|\Theta_n| \leq (\log n)^a$ for some $a>0$, 
\eqref{entropy:slice} holds. 

Now if $\ell \delta_n > \epsilon_0$, we use a similar decomposition as in \eqref{PA0} but with $\Theta_n$ instead of $A_0$, which leads to a covering number bounded by 
$$\log \mathcal N_{n,\ell} \lesssim  (\log n)^{ad} \log (1/\varepsilon) .$$

Thus similarly to Theorem 7.1 of \cite{ghosal2000convergence}, 
there exists ($L_1$) tests $\phi_{n,\ell}$ such that 
\begin{align*}
\mathbb E_0(\phi_n) &\leq \sum_{\ell\geq \ell_0}\mathcal N_{n,\ell}e^{- c_1 n\ell^2 \delta_n^2} =o(1) \\
\sup_{\|f_p - f_0\|> \ell_0\delta_n, \in \mathcal F_n} \mathbb E_{f_P}(1-\phi_n)&\leq e^{- c_1 n\ell_0^2 \delta_n^2}
\end{align*}
if $\ell_0$ is large enough and $2q>ad$ .
Moreover from \cite{ghosal2000convergence},
$$ \Pi (P(\Theta_n^c) > \varepsilon \delta_n ) \lesssim \frac{ G_0(\|\theta\| \geq (\log n)^a) }{ \varepsilon\delta_n} \leq n^{- (D(k_0) + 2t)}$$
by choosing $a\tau >1$ large enough. 

We finally obtain that 
\begin{equation*}
\begin{split}
\mathbb P_{f_0}&\left( \int_{\|f_P - f_0\|_1> \delta_n}e^{\ell_n(P) - \ell_n(P_0)}d\Pi(P) >  n^{-D(k_0)-t}\right) \leq n^{-t} + e^{-c_1 \ell_0^2n \delta_n^2} n^{D(k_0)+t} =o(1)
\end{split}
\end{equation*}
for some $c_1>0$ if $q$ is large enough  and \eqref{my2} is proved.

\subsection{Upper bound on $\Pi (\|f_P - f_0\|_1\leq \delta_n )$. } \label{sec:L1neighbour}

In this section we prove that
\begin{equation}\label{my1}
\Pi (\|f_P - f_0\|_1\leq \delta_n ) = o(n^{-D(k_0) -t}).
\end{equation}
for some $t>0$. 
Using a similar idea to Lemma 3.8 of \cite{gassiat2014local}, we choose $\epsilon \leq \min \{ \| \theta_i^0 - \theta_j^0\|/4; i \neq j\leq k_0\}$, so that the balls $B(\theta_i^0, \epsilon) $, $i\leq k_0$ are disjoint. 
We then write with $A_0 = (\cup_{i=1}^{k_0}B(\theta_i^0, \epsilon))^c$ and $p_j=P(B(\theta_j^0, \epsilon))$
\begin{equation*}
\begin{split}
    \|f_0 - f_P\|_1 &= \left\| \int_{A_0}f_\theta dP(\theta)  + 
    \sum_{j=1}^{k_0} \left[p_j-p_j^0\right] f_{\theta_j^0} +  \int_{B(\theta_j^0, \epsilon)}\left(\theta - \theta_j^0\right)^T\nabla f_{\theta_j^0}dP(\theta) \right. \\ 
    & \quad \left. + \frac{ 1 }{2}\int_{B(\theta_j^0, \epsilon)}\left(\theta - \theta_j^0\right)^TD^2 f_{\theta_j^0}\left(\theta - \theta_j^0\right)dP(\theta) + R \right\|_1 \\
    &:= \| \Delta_n(P) + R\|_1
\end{split}
\end{equation*}
where 
\begin{equation*}
\begin{split} 
\|R \|_1 &\leq \sum_{j=1}^{k_0}  \int_{B(\theta_j^0, \epsilon)}\left\|\sup_{\|\theta'-\theta_0^j\|\leq \|\theta-\theta_j^0\|}\left|\left(\theta - \theta_j^0\right)^T\left[D^2 f_{\theta'}-D^2 f_{\theta_j^0}\right]\left(\theta - \theta_j^0\right)\right|\right\|_1dP(\theta) \\
 & \leq d \sum_{j=1}^{k_0} \|H_3\|_1\int_{B(\theta_j^0, \epsilon)}\left\|\theta - \theta_j^0\right\|^{2 + \delta}dP(\theta).
\end{split}
\end{equation*}
By Lemma \ref{lem:Constant}, there exists $c(f_0)$ such that 
\begin{equation} 
\begin{split}
   \| \Delta_n(P)\|_1 &\geq c(f_0)\left[ P(A_0)  + 
    \sum_{j=1}^{k_0} \left|p_j-p_j^0\right|+ \left\|\int_{B(\theta_j^0, \epsilon)}\left(\theta - \theta_j^0\right) dP(\theta)\right\|+\right.\\
    &\quad\left.
   \int_{B(\theta_j^0, \epsilon)}\left\|\theta - \theta_j^0\right\|^2 dP(\theta)\right] \\
   &:= c(f_0)N_n(P)
   \end{split}
\end{equation}
Then if $ \|f_0 - f_P\|_1 \leq \delta_n$, then 
$$ N_n(P)c(f_0) \leq \delta_n +d \sum_{j=1}^{k_0} \|H_3\|_1\int_{B(\theta_j^0, \epsilon)}\|\theta - \theta_j^0\|^{2 + \delta}dP(\theta)$$
and choosing $\epsilon$ small enough
$$ N_n(P) \leq \frac{ 2\delta_n }{c(f_0)}:=\delta_n' .$$
We now bound from above 
 $\Pi(N_n(P) \leq \delta_n')$. Consider $P_j = P\1_{B(\theta_j^0, \epsilon)}/p_j$ and $p_0 = P(A_0)$, then
  under the Dirichlet Process prior $(p_0, \cdots, p_{k_0}); P_1, \cdots , P_{k_0}$ are mutually independent and $P_j \sim DP( MG_0(\cdot \cap B(\theta_j^0, \epsilon))$. Hence
  \begin{equation*} 
  \begin{split}
  \Pi(N_n(P) \leq \delta_n') & \lesssim \delta_n^{MG(A_0)}\delta_n^{k_0-1}\prod_{j=1}^{k_0}\Pi\left( \left\|\int (\theta - \theta_j^0)dP_j(\theta)\right\|\leq \delta_n'\right) 
  \end{split}
    \end{equation*}
Let $\Pi_j$ be the marginal prior distribution of $\int \theta dP_j(\theta)$, then 
$$ \Pi_j( B(\theta_j^0, \delta_n') )\leq  \sup_{\|x - \theta_j^0\|\leq \delta_n'} \pi_j(x)(\delta_n')^{d} \lesssim \delta_n^{d}$$
as soon as $\Pi_j$ has a bounded density $\pi_j$.  Using Lemma \ref{lem:densitymeanP}, we know that $\Pi_j$ has a bounded density $\pi_j$ and \eqref{my1} is proved.

\section{Choice of tuning parameters for the numerical experiments}

\subsection{Figure \ref{fig:boxplotFMgalaxy}}

\begin{table}[H]
    \centering
    \label{tab:hyperParamFM}
    \begin{tabular}{c|cccc}
\hline
\multicolumn{5}{c}{Tuning parameters}\\
&$K=3$ &$K=5$& $K=6$ & $K=8$\\
\hline
\hline
Chib &$T=10^5$& $T=10^5$ & $T=2\cdot10^5$ & $T=3\cdot10^5$ \\
&$burnIn=10^4$&$burnIn=10^4$&$burnIn=2\cdot10^4$&$burnIn=3\cdot10^4$\\
\hline
Chib Permutation &$T=10^5$& $T=10^5$ &\multirow{2}{*}{-}& \multirow{2}{*}{-} \\
&$burnIn=10^4$&$burnIn=10^4$&&\\
\hline
Adaptive SMC &$N=10000$& $N=10000$ & $N=10000$ &$N=10000$\\
&M=10&M=10&M=10&M=10\\
\hline
Bridge Sampling &$T_1=12\cdot10^3$& $T_1=12\cdot10^3$ &  \multirow{4}{*}{-} &  \multirow{4}{*}{-}\\
&$T_2=12\cdot10^3$&$T_2=12\cdot10^3$&&\\
&$T_0=100$&$T_0=100$&&\\
&$burnIn=5000$&$burnIn=5000$&&\\
\hline
ChibPartition &$T=10^5$& $T=10^5$ & $T=2\cdot10^5$ & $T=3\cdot10^5$\\
&$burnIn=10^4$&$burnIn=10^4$&$burnIn=2\cdot10^4$&$burnIn=3\cdot10^4$\\
\hline
SIS &$T=10^3$& $T=6\cdot10^3$ & $T=7\cdot10^3$&$T=10^4$\\
\hline
Arithmetic Mean &$T=3\cdot10^6$& $T=3\cdot10^6$ & $T=3\cdot10^6$&$T=4\cdot10^6$\\
\end{tabular}
\caption{Choice of tuning parameters for Figure \ref{fig:boxplotFMgalaxy}}
\end{table}

\subsection{Figure \ref{fig:boxplotSynthData}}
     \begin{table}[H]
    \centering
    \begin{tabular}{ccc|cc}
\hline
&\multicolumn{2}{c}{$n=1000$}&\multicolumn{2}{c}{$n=2000$}\\
&$K=3$ &$K=13$& $K=3$ & $K=13$\\
\hline
\hline
Adaptive SMC &$N=2\cdot10^4$& $N=2\cdot10^4$ & $N=2\cdot10^4$ &$N=2\cdot10^4$\\
&M=10&M=10&M=10&M=10\\
\hline
ChibPartition &$T=10^5$& \multirow{2}{*}{-} &\multirow{2}{*}{-} &\multirow{2}{*}{-}\\
&$burnIn=10^4$&&&\\
\hline
Bridge Sampling &$T_1=12\cdot10^3$& \multirow{4}{*}{-} &$T_1=12\cdot10^3$ &\multirow{4}{*}{-}\\
&$T_2=12\cdot10^3$&&$T_2=12\cdot10^3$&\\
&$T_0=100$&&$T_0=100$&\\
&$burnIn=5000$&&$burnIn=5000$&\\
\hline
SIS &$T=2\cdot10^3$& $T=2\cdot10^3$ & $T=10^4$&$T=10^4$\\
\end{tabular}
    \caption{Choice of tuning parameters for Figure \ref{fig:boxplotSynthData}.}
    \label{tab:hyperParamSynthFM}
\end{table}

\subsection{Figure \ref{fig:galaxyDPM}}
\begin{table}[h]
    \centering
    
    \begin{tabular}{c|ccc}
\hline
\multicolumn{4}{c}{Tuning parameters}\\
&$n=6$ &$n=36$& $n=82$\\
\hline
\hline
Chib &$T_1=3\cdot 10^4$& $T_1=5\cdot 10^4$ & $T_1=10^5$  \\
&$burnIn=2\cdot10^3$&$burnIn=5\cdot10^3$&$burnIn=10^4$\\
&$T_2=2\cdot10^3$&$T_2=2\cdot10^3$&$T_2=2\cdot10^3$\\
\hline
RLR-SIS &$T_1=3\cdot 10^4$& $T_1=5\cdot 10^4$ &$T_1= 10^5$ \\
&$burnIn=2\cdot10^3$&$burnIn=5\cdot10^3$&$burnIn=10^4$\\
&$T_2=2\cdot10^3$&$T_2=2\cdot10^3$&$T_2=2\cdot10^3$\\
\hline
RLR-Prior &$T_1=3\cdot 10^4$&$T_1=5\cdot 10^4$&  $T_1= 10^5$ \\
&$burnIn=2\cdot10^3$&$burnIn=5\cdot10^3$&$burnIn=10^4$\\
&$T_2=2.8\cdot10^4$&$T_2=4.5\cdot10^4$&$T_2=9\cdot10^4$\\
\hline
Harmonic Mean &$T=2\cdot10^4$& $T_1=5\cdot10^4$ & $T_1=10^5$\\
&$burnIn=2\cdot 10^3$&$burnIn=5\cdot 10^3$&$burnIn=10^4$\\
\hline
Arithmetic Mean &$T=3\cdot10^4$& $T=2\cdot10^5$ & $T=5\cdot10^5$
\end{tabular}
\caption{Choice of tuning parameters for Figure \ref{fig:galaxyDPM}}
\label{tab:hyperParamDPM}
\end{table}

\end{document}